
\documentclass[twoside,leqno,twocolumn]{article}  
\usepackage{ltexpprt} 
\usepackage{epsfig}

\newcommand{\eps}{\varepsilon}
\def \remark {\noindent {\bf Remark.} \hskip 5pt}
\def \remarks {\noindent {\bf Remarks.} \hskip 5pt}
\newcommand{\ceil}[1]{\left\lceil {#1} \right\rceil}

\begin{document}

\title{\Large A PTAS for TSP with Neighborhoods Among Fat Regions in 
the Plane\thanks{Partially supported by grants from
the National Science Foundation (ACI-0328930, CCF-0431030, CCF-0528209), 
Metron Aviation, and NASA Ames
(NAG2-1620).}}

\author{Joseph S. B. Mitchell\thanks{Stony Brook University, 
Stony Brook, NY 11794-3600}}

\date{March 26, 2014}

\maketitle


\begin{abstract} \small\baselineskip=9pt 
  The Euclidean TSP with neighborhoods (TSPN) problem seeks a shortest
  tour that visits a given collection of $n$ regions ({\em
    neighborhoods}).  We present the first polynomial-time
  approximation scheme for TSPN for a set of regions given by
  arbitrary disjoint fat regions in the plane.  This improves
  substantially upon the known approximation algorithms, and is the
  first PTAS for TSPN on regions of non-comparable sizes.  Our result
  is based on a novel extension of the $m$-guillotine method.  The
  result applies to regions that are ``fat'' in a very weak sense:
  each region $P_i$ 
has area  $\Omega([diam(P_i)]^2)$, 
 but is otherwise arbitrary.  
\end{abstract}

\section{Introduction}

Consider the following variant of the well-studied traveling salesman
problem (TSP): A salesman wants to meet a set of $n$ potential buyers.
Each buyer specifies a connected region in the plane, his {\em
  neighborhood}, within which he is willing to meet the salesman. The
salesman wants to find a tour of shortest length that visits all of
the buyers' neighborhoods and finally returns to its initial departure
point.  This problem, which is known as the {\em TSP with
  neighborhoods} (TSPN), was introduced by Arkin and
Hassin~\cite{ah-aagcs-94} and is a generalization of the classic
Euclidean Traveling Salesman Problem (TSP), in which the regions, or
``neighborhoods,'' are single points, and consequently is
NP-hard~\cite{gj-cigtn-79,p-etspi-77}.

Our main result is a polynomial-time approximation scheme for regions
that are ``fat'' and disjoint in the plane.  Here, we use a very weak
notion of ``fat'' -- a region is {\em fat} if it contains a disk whose
size is within a constant factor of the diameter of the region. We
make no assumption about the sizes of the regions. Previous PTAS
results were known only for the case in which the regions are nearly
of equal size.  The best prior approximation ratio for fat regions as
defined here was $O(\log n)$; for a much more restrictive notion of
fatness, an $O(1)$-approximation was the best prior
result~\cite{bgklos-tspnv-05}.

Our result settles in the affirmative, and solves much more generally,
an open problem that has been circulating in the computational
geometry community for nearly a decade: Is there a PTAS for TSPN on a
set of disjoint disks (or squares)?

\paragraph{Related Work.}
Geometric versions of the TSP have attracted considerable attention in
the last several years, as it was discovered that the TSP on point
sets in any fixed dimension admits a PTAS, by results of
Arora~\cite{a-ptase-98}, Mitchell~\cite{m-gsaps-99,m-gsaps-97}, and
Rao and Smith~\cite{rs-aggsb-98}.  See the
surveys~\cite{a-asnhg-03,m-gspno-00,m-spn-04}.

The TSP with neighborhoods (TSPN) is one of the challenging problems
that has remained largely ``stuck'' in our ability to approximate
optimal solutions.  The best general method remains an $O(\log
n)$-approximation~\cite{efs-atspin-06,gl-faatn-99,mm-aagtn-95}.
If all regions have the same or comparable diameter, but may overlap, then
$O(1)$-approximations are known~\cite{dm-aatsp-03,efs-atspin-06}.
Recently, the most general version of the problem, in which the
regions are allowed to be arbitrary (overlapping) connected
subsets of the plane, has been shown to be
APX-hard~\cite{bgklos-tspnv-05,ss-catsp-03}, as has the case of
(intersecting) line segments of nearly equal
lengths~\cite{efs-atspin-06}, suggesting that it is very unlikely that
a PTAS exists for these versions of the problem.  However, it is open
whether or not a PTAS may exist for the case, e.g., of {\em disjoint}
connected regions in the plane.

Attempts to apply the Arora/Mitchell methods have resulted in only
limited successes.  In particular, Dumitrescu and
Mitchell~\cite{dm-aatsp-03} have shown that if the regions are all
about the same size, have bounded depth, and are fat
(e.g., if the regions are disks in the plane, with bounded ratio of
largest to smallest, with no point lying in more than a constant
number of regions), then the $m$-guillotine method yields a PTAS for
TSPN in the plane that requires time $n^{O(1)}$.
In a related approach, based on Arora~\cite{a-ptase-98}, Feremans and
Grigoriev~\cite{fg-asggp-05} have also given a PTAS, requiring
$n^{O(1/\eps)}$ time, for TSPN for regions that correspond to disjoint
fat polygons of comparable size in the plane; the authors observe that
their algorithm applies also in higher dimensions.  (Actually, the
regions to be visited may be disconnected sets of points that each lie
within one of the disjoint fat polygons of comparable size; a PTAS for
this generalization also follows from \cite{dm-aatsp-03} for the
2-dimensional setting.)

Using a different technique, mapping the problem to an appropriate
``one-of-a-set'' TSP, de Berg et al.~\cite{bgklos-tspnv-05} give an
$O(1)$-approximation for disjoint ``fat'' convex regions; Elbassioni
et al.~\cite{efms-aaegtsp-05} substantially improved the approximation
factor (as a function of the fatness parameter $\alpha$) and
generalized to the discrete case in which the neighborhoods to be visited may
be arbitrary sets of points, with each set lying within a fat region,
{\em not necessarily convex}. 
Most recently, Elbassioni, Fishkin, and Sitters~\cite{efs-atspin-06} 
give an $O(1)$-approximation algorithm for the discrete case
in which the corresponding regions are intersecting, convex, and fat,
of comparable size. These constant-factor approximations
require a much more restrictive (stronger) notion of ``fat'' than we
use in this paper: it is required that {\em any} disk that is not
fully contained in the region, but with its center in the region, must
have a constant fraction of its area inside the region.  This
definition rules out ``skinny tentacles,'' which are allowed in our
definition of fat regions; \cite{bgklos-tspnv-05} also rule out any
form of overlapping among the regions.  Thus, for the class of regions
considered in this paper, no previous approximation bound better than
$O(\log n)$ was known; we give a PTAS.

The original work on the TSPN was by Arkin and
Hassin~\cite{ah-aagcs-94}, who show that when the neighborhoods are
connected and ``well behaved'' (e.g., disks, or having roughly
equal-length and parallel diameter segments), there is an
$O(1)$-approximation algorithm for the TSPN, with running time
$O(n+k\log k)$, where $n$ is the total complexity of the $k$
neighborhoods.  Further, they prove a form of ``combination lemma''
that allows one to consider unions of sets of well-behaved
neighborhoods; the resulting approximation factor is given by the sum
of the approximation factors obtained for each class individually.

For the general case of connected polygonal neighborhoods, Mata and
Mitchell~\cite{mm-aagtn-95} obtained an $O(\log k)$-approximation
algorithm, based on ``guillotine rectangular subdivisions,'' with time
bound $O(n^5)$.  Gudmundsson and Levcopoulos~\cite{gl-faatn-99} have
obtained a faster method, which, for any fixed $\epsilon>0$, is
guaranteed to perform at least one of the following two tasks
(although one does not know in advance which one will be
accomplished): (1) it outputs in time $O(n+k\log k)$ a tour with
length at most $O(\log k)$ times optimal; or (2) it outputs a tour
with length at most $(1+\epsilon)$ times optimal, in time $O(n^3)$ (if
$\epsilon\leq 3$) or $O(n^2\log n)$ (if $\epsilon>3$).  However, no
polynomial-time method guaranteeing a constant factor approximation is
known for general neighborhoods.

The TSPN problem is even harder if the neighborhoods are disconnected.
See the surveys of Mitchell~\cite{m-gspno-00,m-spn-04} for a summary
of results.  A recent result of Dror and
Olin~\cite{do-coedg-04} shows that the TSPN for neighborhoods that
are pairs of points has no PTAS.  

\paragraph{Preliminaries.}\quad
We often speak of the {\em bounding box}, $BB(X)$, of a set $X$, by which we
will always mean the axis-aligned bounding of $X$.
We say that a region $P$ is {\em $\alpha$-fat} (or simply {\em fat}) if
the area, $area(P)$, is at least $\alpha$ times $[diam(P)]^2$,
where $diam(P)$ is the diameter of $P$.
(For $P$ to be fat, it suffices that the ratio of the 
radius of the smallest circumscribing circle to the
radius of the largest inscribed circle is bounded; however, the notion is more general than this.)
Here, we consider $\alpha$ to be a fixed constant.
Note that this definition of fat
applies to convex as well as nonconvex regions.
Note too that fatness implies that the bounding box of a fat region is
``fat'', meaning that it has a bounded aspect ratio (the ratio of its longer side length to its
shorter side length is bounded).  
We critically use our notion of fatness in an area packing argument
(the proof of Lemma~\ref{lem:lower-bnd}); it allows us to give a lower
bound on the area occupied by a set of regions, in terms of the diameters
of the regions.

The input to our algorithm will be a set ${\cal
  R}=\{P_1,P_2,\ldots,P_n\}$ of $n$ disjoint connected fat regions in the plane.
For convenience, we assume that each $P_i$ is a polygonal region,
specified by its vertices.  (More general regions, e.g., splinegons,
are easily handled as well.)  Our algorithms are polynomial in the
total number of vertices used to specify the input.  

A {\em tour}, $T$, is a cycle that visits each region of ${\cal R}$.
The {\em length} of tour $T$, denoted $\mu(T)$, is the Euclidean
length of the curve $T$.  In the {\em TSP with neighborhoods} (TSPN)
problem, our goal is to compute a tour whose length is guaranteed to
be close to the shortest possible length of a tour.  We let $T^*$
denote any optimal tour and let $L^*=\mu(T^*)$ denote its length.  An
algorithm that outputs a tour whose length is guaranteed to be at most
$c\cdot L^*$ is said to be a {\em $c$-approximation algorithm} and to
have an {\em approximation ratio} of $c$.  A family of
$(1+\epsilon)$-approximation algorithms, parameterized by
$\epsilon>0$, is said to be a {\em polynomial-time approximation
  scheme} (PTAS).

\section{Structural Results and A Lower Bound}

Let ${\cal R}=\{P_1,P_2,\ldots,P_n\}$ be a set of $n$ disjoint fat
connected regions (simple polygons) in the plane.
The following lemma follows readily from the triangle inequality.

\begin{lemma}
\label{lem:basic}
  An optimal tour $T^*$ is a simple polygon having at most $n$
  vertices.
\end{lemma}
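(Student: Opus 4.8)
The plan is to establish two things separately: that $T^*$ is piecewise linear with at most $n$ vertices, and that it is non-self-intersecting. Both will follow from local exchange arguments driven by the triangle inequality, exploiting only that $T^*$ is a shortest closed curve meeting every region.

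First I would argue the polygonal structure. Since $T^*$ visits each $P_i$, fix for each $i$ a point $p_i \in T^* \cap P_i$ at which the tour meets $P_i$, and record the cyclic order in which these representative points are encountered along $T^*$. Between two consecutive representatives $p_i$ and $p_j$ the tour traces some subarc; replacing that subarc by the straight segment $\overline{p_i p_j}$ does not increase the length (triangle inequality) and still meets every region, since all representative points are retained. Because $T^*$ is already of minimum length, each such subarc must itself coincide with the straight segment. Hence $T^*$ is a closed polygonal curve whose vertex set is contained in $\{p_1,\dots,p_n\}$, giving at most $n$ vertices. (If the tour happens to touch several regions at a common point, or to meet a region along a segment, the vertex count only drops.)

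Next I would show simplicity by contradiction. Suppose $T^*$ crosses itself at a point $x$, where two of its segments intersect transversally. Writing the tour as two closed subwalks joined at $x$, I would reverse the orientation of one of them; this reconnects the four segment-endpoints around $x$ by the ``opposite'' pairing, and by the triangle inequality the total length of the two new connecting segments is strictly smaller than that of the two crossing ones, while every region is still visited. This yields a strictly shorter tour, contradicting optimality. A touching-without-crossing is handled the same way or absorbed into the straightening step above. Therefore $T^*$ is simple.

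The main obstacle is one of careful bookkeeping rather than of deep content: I must ensure the representative points can be chosen consistently (one per region, even when a region is met several times or when distinct regions share a contact point), that the straightening step never causes the curve to miss a region, and that the uncrossing operation produces a single valid closed tour rather than splitting it into two loops---this is precisely why reversing one subwalk, rather than simply deleting the crossing, is the correct move. Degenerate incidences (tangencies, collinear overlaps, visits confined to a single point) must be checked, but each only improves the vertex bound or the length and so cannot threaten the statement.
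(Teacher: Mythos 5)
Your proof is correct and is exactly the argument the paper has in mind: the paper gives no explicit proof, stating only that the lemma ``follows readily from the triangle inequality,'' and your straightening-between-representatives step plus the uncrossing exchange is the standard way to spell that out. The care you take with reversing one subwalk (to avoid splitting the tour into two cycles) and with retaining one representative point per region is precisely the right bookkeeping.
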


We let $R_0$ be a minimum-diameter axis-aligned rectangle that
intersects or contains all regions $P_i$.  
Let $D$ be the diameter of~$R_0$.  Note that $R_0$ is easily computed
in polynomial time by standard critical placement arguments; even more
easily computed is a constant-factor approximation of $R_0$, and this
is sufficient for our purposes.

\begin{lemma}
\label{lem:bound}
$2D\leq L^* \leq n D$.
\end{lemma}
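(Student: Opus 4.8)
We need to prove $2D \leq L^* \leq nD$, where:
- $D$ is the diameter of $R_0$, a minimum-diameter axis-aligned rectangle intersecting or containing all regions $P_i$.
- $L^*$ is the length of an optimal tour.

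**Upper bound $L^* \leq nD$:**

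The idea is to construct a tour of length at most $nD$. Pick one point from each region. Since $R_0$ intersects or contains all regions, each region has a point... hmm, but does each region have a point within distance related to $D$?

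Actually, $R_0$ intersects each $P_i$, so each $P_i$ has a point in $R_0$ (or touching it). Wait—"intersects or contains." So $R_0 \cap P_i \neq \emptyset$ for each $i$, meaning each region has at least one point inside $R_0$.

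Pick a point $p_i \in P_i \cap R_0$ for each $i$. These $n$ points all lie in $R_0$. Any two points in $R_0$ are at distance at most $D$ (the diameter of $R_0$). A tour visiting all $n$ points: connect them in a cycle $p_1 \to p_2 \to \cdots \to p_n \to p_1$. This has $n$ edges, each of length at most $D$, giving total length $\leq nD$. This tour visits all regions. So $L^* \leq nD$.

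**Lower bound $2D \leq L^*$:**

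The optimal tour $T^*$ must visit all regions. Since $R_0$ has diameter $D$ and intersects all regions... we need the tour to have length at least $2D$.

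The diameter $D$ of rectangle $R_0$ is the length of its diagonal. The tour must visit all regions, and $R_0$ is the minimum-diameter rectangle intersecting all.

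The key: Consider projecting onto a line. The tour must "span" a region of extent comparable to $D$. Since $R_0$ is minimum-diameter, the regions can't all fit in a smaller rectangle.

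Let me think about width. If $R_0$ has dimensions $w \times h$ with $D = \sqrt{w^2 + h^2}$. The minimality of $R_0$ means the regions' arrangement forces this size.

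A tour is a closed curve. Its length is at least twice the diameter of the set of points it must visit (since to go out and come back). More precisely, if the tour visits points spanning width $W$ in some direction, the tour length is $\geq 2W$ (project onto that direction; the projection must cover an interval of length $W$, and a closed curve covering an interval of length $W$ has projected length $\geq 2W$, so actual length $\geq 2W$).

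So I'd show: the tour must visit points whose projection onto some axis has extent $\geq D/\sqrt{2}$... that gives $2 \cdot D/\sqrt{2} = D\sqrt{2}$, not quite $2D$.

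Better: Since $R_0$ is the *minimum-diameter* rectangle meeting all regions, removing it means no smaller-diameter rectangle works. The tour $T^*$ must intersect every region. The bounding box of $T^*$ must intersect every region (since $T^*$ does). So $BB(T^*)$ is an axis-aligned rectangle intersecting all regions, hence its diameter is $\geq D$ by minimality of $R_0$. If $BB(T^*)$ has dimensions $a \times b$ with $\sqrt{a^2+b^2} \geq D$, then the tour length is $\geq 2(a+b) \geq 2\sqrt{a^2+b^2} \geq 2D$.

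Wait: a closed curve with bounding box $a \times b$ has length $\geq 2(a+b)$ (it must traverse the full width and height, out and back). And $2(a+b) \geq 2\sqrt{a^2+b^2}$ since $(a+b)^2 = a^2 + b^2 + 2ab \geq a^2+b^2$. And $\sqrt{a^2+b^2} = \text{diam}(BB(T^*)) \geq D$. So $L^* \geq 2(a+b) \geq 2\sqrt{a^2+b^2} \geq 2D$.

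The main obstacle is the lower bound, specifically recognizing that the minimality of $R_0$ lets us compare to $BB(T^*)$.

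---

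The plan is to establish the two inequalities separately, with the lower bound being the more delicate of the two.

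For the upper bound $L^* \leq nD$, I would construct an explicit tour and invoke its length as an upper bound on the optimum. Since $R_0$ intersects or contains every region, each $P_i$ contains at least one point $p_i$ lying in $R_0$. All $n$ chosen points $p_1,\ldots,p_n$ therefore lie in $R_0$, so any two of them are separated by distance at most $D = diam(R_0)$. Closing them into a cycle $p_1 \to p_2 \to \cdots \to p_n \to p_1$ gives a valid tour (it visits every region) using $n$ edges, each of length at most $D$, for a total length of at most $nD$. Since $T^*$ is optimal, $L^* \leq nD$.

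For the lower bound $2D \leq L^*$, I would exploit the \emph{minimality} of $R_0$. The key observation is that the bounding box $BB(T^*)$ is itself an axis-aligned rectangle that intersects every region: since the optimal tour $T^*$ visits each $P_i$, it contains a point of each $P_i$, and that point lies in $BB(T^*)$. By the defining minimality of $R_0$ as a minimum-diameter axis-aligned rectangle meeting all regions, we conclude $diam(BB(T^*)) \geq D$. Writing the side lengths of $BB(T^*)$ as $a$ and $b$, so that $diam(BB(T^*)) = \sqrt{a^2 + b^2}$, I would then use the elementary fact that any closed curve with an $a \times b$ bounding box has length at least $2(a+b)$: projecting the curve onto each coordinate axis, the projection must cover an interval of the full side length and, being the image of a closed curve, is traversed at least twice, contributing $2a$ and $2b$ respectively. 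Finally, since $2(a+b) \geq 2\sqrt{a^2+b^2}$ (because $(a+b)^2 \geq a^2 + b^2$), we obtain $L^* = \mu(T^*) \geq 2(a+b) \geq 2\sqrt{a^2+b^2} = 2\cdot diam(BB(T^*)) \geq 2D$.

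The main obstacle is the lower bound, and specifically the conceptual step of relating the optimal tour back to the rectangle $R_0$. The upper bound is routine once one notices that every region meets $R_0$. The lower bound requires the right intermediary object: rather than trying to argue directly about $R_0$ and $T^*$, one passes through $BB(T^*)$, which simultaneously (i) qualifies as a competitor in the minimization defining $R_0$, forcing $diam(BB(T^*)) \geq D$, and (ii) controls the tour length from below via the ``out-and-back in each direction'' projection argument. Chaining the inequality $2(a+b)\geq 2\sqrt{a^2+b^2}$ is then just the triangle inequality in disguise, and no delicate estimates are needed.
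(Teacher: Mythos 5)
Your upper bound is fine and matches the paper's intent: picking one point of each region inside $R_0$ and cycling through them gives a tour of length at most $nD$. Your lower bound also has the right skeleton --- the paper likewise passes through $W_0=BB(T^*)$, notes that it is an axis-aligned rectangle meeting every region so that $diam(W_0)\geq D$ by the minimality of $R_0$, and then invokes the bound ``tour length $\geq 2\,diam(W_0)$'' (Fact~1 of \cite{ah-aagcs-94}).

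However, your justification of that last bound contains a genuine error. The ``elementary fact'' that a closed curve with an $a\times b$ bounding box has length at least $2(a+b)$ is false: the doubled diagonal from $(0,0)$ to $(a,b)$ and back (or the inscribed rhombus through the four edge midpoints) has bounding box exactly $a\times b$ and length $2\sqrt{a^2+b^2}<2(a+b)$. The flaw in your projection argument is the step where the two directions ``contribute $2a$ and $2b$'': the $x$-projection gives $\int|x'|\,dt\geq 2a$ and the $y$-projection gives $\int|y'|\,dt\geq 2b$, but these cannot be added to lower-bound the length, because $\sqrt{x'^2+y'^2}\leq|x'|+|y'|$ --- the inequality points the wrong way, and projections alone only yield $L\geq 2\max(a,b)$. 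Fortunately you only need the weaker, correct statement $L\geq 2\sqrt{a^2+b^2}=2\,diam(W_0)$, and that does follow from the same two total-variation bounds if you combine them with the vector form of the triangle inequality, $L=\int\sqrt{x'^2+y'^2}\,dt\geq\sqrt{\left(\int|x'|\,dt\right)^2+\left(\int|y'|\,dt\right)^2}\geq 2\sqrt{a^2+b^2}$, or simply by citing Fact~1 of \cite{ah-aagcs-94} as the paper does. With that repair the argument goes through and coincides with the paper's proof.
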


\begin{proof}
  The lower bound on $L^*$ follows from the fact (Fact~1 of
  \cite{ah-aagcs-94}) that the shortest tour visiting all four sides
  of the (axis-aligned) bounding box, $W_0=BB(T^*)$, of $T^*$ has length at
  least twice the diameter of $W_0$; since $T^*$ visits all four
  sides of $W_0$, and $R_0$ has diameter at most that of
  $W_0$, this implies that $L^*\geq 2D$.
  The upper bound follows from the fact that each of the at most $n$
  edges of $T^*$ is at most of length $D$, since any two
  regions are at most this distance apart.
\end{proof}

For a fixed $\eps>0$, let ${\cal G}$ denote the regular grid (lattice)
of points $(i\delta,j\delta)$, for integers $i$ and $j$, where
$\delta=\epsilon D/n$.
Let $\Gamma_i$ be the subset of grid points ${\cal G}$ at distance at most
$\delta/\sqrt{2}$ from region~$P_i$.  Note that $\Gamma_i\neq\emptyset$,
and that it may be that $\Gamma_i=\Gamma_j$ for 
$i\neq j$.

\begin{lemma}
\label{lem:grid}
  Any tour $T$ (of length $L$) that visits ${\cal
    R}=\{P_1,P_2,\ldots,P_n\}$ can be modified to be a tour $T_{\cal
    G}$, of length at most $(1+\eps)L$, that visits
  $\{\Gamma_1,\Gamma_2,\ldots,\Gamma_n\}$.  Similarly, any tour
  $T_{\cal G}$, of length $L_{\cal G}$, that visits
  $\{\Gamma_1,\Gamma_2,\ldots,\Gamma_n\}$ can be modified to be a tour
  $T$, of length at most $(1+\eps)L_{\cal G}$, that visits ${\cal
    R}=\{P_1,P_2,\ldots,P_n\}$.
\end{lemma}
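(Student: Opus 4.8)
The plan is to prove both directions by the same \emph{snapping} (detour) argument: starting from the given tour, I splice in, region by region, a short there-and-back spur that reaches the target set ($\Gamma_i$ in the first direction, $P_i$ in the second). The single geometric fact driving everything is that the covering radius of the grid ${\cal G}$ is $\delta/\sqrt2$: every point of the plane lies within distance $\delta/\sqrt2$ of some grid point, since rounding each coordinate to the nearest multiple of $\delta$ moves it by at most $\delta/2$ per axis, hence by at most $\sqrt2\,(\delta/2)=\delta/\sqrt2$ in all. Because $\Gamma_i$ is \emph{defined} as the set of grid points within $\delta/\sqrt2$ of $P_i$, this covering radius is exactly calibrated so that snapping a point of $P_i$ to its nearest grid point lands inside $\Gamma_i$, and conversely every point of $\Gamma_i$ lies within $\delta/\sqrt2$ of $P_i$.

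First I would treat the forward direction. For each $i$, pick a point $p_i\in P_i\cap T$ (one exists because $T$ visits $P_i$) and let $g_i$ be the grid point nearest $p_i$; by the covering-radius fact $|p_ig_i|\le\delta/\sqrt2$, and since $p_i\in P_i$ this gives $g_i\in\Gamma_i$. Inserting the spur $p_i\to g_i\to p_i$ keeps $T$ a single closed tour and adds at most $2\cdot\delta/\sqrt2=\sqrt2\,\delta$ in length; performing this for all $n$ regions produces a tour $T_{\cal G}$ visiting every $\Gamma_i$ at total added cost at most $n\sqrt2\,\delta=\sqrt2\,\eps D$ (recall $\delta=\eps D/n$). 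The length bookkeeping then closes cleanly: since $T$ visits all regions, $L\ge L^*\ge 2D$ by Lemma~\ref{lem:bound}, so $\sqrt2\,\eps D\le\sqrt2\,\eps(L/2)=\eps L/\sqrt2\le\eps L$, giving $\mu(T_{\cal G})\le(1+\eps)L$.

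The reverse direction is symmetric in construction but has one genuine wrinkle. For each $i$, let $g_i\in\Gamma_i$ be a grid point visited by $T_{\cal G}$ and let $p_i\in P_i$ realize $\mathrm{dist}(g_i,P_i)\le\delta/\sqrt2$; splicing in the spurs $g_i\to p_i\to g_i$ yields a tour $T$ visiting all regions with $\mu(T)\le L_{\cal G}+\sqrt2\,\eps D$. The obstacle is that, unlike before, I cannot directly invoke $L^*\ge 2D$ to lower-bound $L_{\cal G}$, because $L_{\cal G}$ is the length of a tour through the $\Gamma_i$, not through the regions themselves; this is the step I expect to require the most care. The key idea is a short self-referential argument: the tour $T$ I just built \emph{does} visit all regions, so $L_{\cal G}+\sqrt2\,\eps D\ge\mu(T)\ge L^*\ge 2D$, whence $L_{\cal G}\ge(2-\sqrt2\,\eps)D\ge\sqrt2\,D$ for all sufficiently small $\eps$ (e.g.\ $\eps\le\sqrt2-1$). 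With this bound the added cost satisfies $\sqrt2\,\eps D\le\eps L_{\cal G}$, so $\mu(T)\le(1+\eps)L_{\cal G}$, completing the proof. (Alternatively one can lower-bound $L_{\cal G}$ purely geometrically: the $(\delta/\sqrt2)$-neighborhood of $BB(T_{\cal G})$ meets every $P_i$, so its diameter is at least $D$ by minimality of $R_0$, forcing $BB(T_{\cal G})$ to have diameter at least $D-2\delta$ and hence $L_{\cal G}\ge 2(D-2\delta)$.)
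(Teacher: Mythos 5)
Your proposal is correct and uses essentially the same detour/spur argument as the paper's proof, with identical bookkeeping ($n\cdot 2\delta/\sqrt{2}=\sqrt{2}\,\eps D\leq \eps L$ via $L\geq L^*\geq 2D$). The paper dispatches the reverse direction with ``proved similarly,'' so your careful treatment of how to lower-bound $L_{\cal G}$ (via the self-referential bound $L_{\cal G}\geq(2-\sqrt{2}\,\eps)D$, valid for $\eps\leq\sqrt{2}-1$, which is without loss of generality for a PTAS) is a legitimate and correctly executed filling-in of a detail the paper omits.
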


\begin{proof}
  Since $T$ visits some point $p_i\in P_i$, for each region $P_i$,
  we can simply add to $T$ a detour that goes from $p_i$ to a grid
  point and back to $p_i$, for each $i$.  Since no point $p_i\in P_i$
  is further from a grid point of $\Gamma_i$ than $\delta/\sqrt{2}$,
  we get that the total detour length is bounded above by $n\cdot
  2\delta/\sqrt{2} = \eps D\sqrt{2}\leq \eps L^*\leq \eps L$.  The second claim
  is proved similarly.
\end{proof}

A consequence of the lemma is that we can assume, without loss of
generality, that the input regions are each replaced by
grid-conforming rectilinear polygons, with vertices on the grid.


The next lemma provides a means of ``localizing'' an optimal
solution, so that our search for approximately optimal tours can be
restricted to a polynomial-size grid.  
Let $W_0=BB(T^*)$ denote the axis-aligned bounding box of $T^*$, an
optimal tour/tree.
We can assume that $W_0$ contains at least one vertex, $c_0$, of some polygon $P_i$;
otherwise, the problem can be directly solved to optimality in polynomial time.
\footnote{
If $W_0$ contains no vertices of the polygons ${\cal R}$, then we
know that $T^*$ is a shortest tour/tree visiting at least one edge of
each polygon $P_i$ (at points interior to the edges); thus, each $P_i$
has at least one edge intersecting $W_0$.  Since we assume the
polygons $P_i$ are disjoint, we use a natural dominance relationship
among the edges to reduce the problem to the trivial one of finding a
shortest tour visiting at most 4 disjoint line segments (corresponding
to at most one edge per corner of $W_0$).  We solve this case
directly, computing $T^*$ optimally, enumerating the combinatorially
distinct rectangles that have no vertices within them and have at
least one edge of each $P_i$ crossing them.
}
Our algorithm enumerates over all choices of~$c_0$.  

\begin{lemma}
\label{lem:localization}
  There exists an optimal tour $T^*$ of the regions ${\cal R}$, of
  length $L^*$, that lies within the ball, $B(c_0,D_0)$, of radius $D_0=O(nD)$
  centered at $c_0$, a vertex within $W_0$.  Further, there exists a tour $T^*_{\cal G}$ of
  the grid sets $\Gamma_i$, of length at most $(1+\eps)L^*$, that has
  its vertices at grid points ${\cal G}$ that lie within an $N$-by-$N$
  array of grid points centered at $c_0$, where $N=O(n^2/\eps)$.
\end{lemma}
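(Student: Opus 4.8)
The plan is to prove the two claims in sequence, leaning on the length bound of Lemma~\ref{lem:bound} and the grid-rounding construction of Lemma~\ref{lem:grid}. The first claim (containment in a ball of radius $O(nD)$) reduces to bounding the diameter of $W_0$. Since $c_0$ is a vertex lying within $W_0 = BB(T^*)$ and $T^* \subseteq W_0$ by definition of the bounding box, it suffices to bound the distance between any two points of $W_0$. The key elementary observation is that for any closed curve of length $L^*$, the distance between two of its points is at most $L^*/2$ (the two arcs joining them have lengths summing to $L^*$, and the straight-line distance is at most the shorter arc); hence both the width and the height of $W_0$ are at most $L^*/2$. Combining with the upper bound $L^* \leq nD$ from Lemma~\ref{lem:bound}, the diagonal of $W_0$ — and therefore the distance from $c_0$ to any point of $T^*$ — is at most $L^*/\sqrt{2} \leq nD/\sqrt{2}$. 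Setting $D_0 = nD/\sqrt{2} = O(nD)$ then yields $T^* \subseteq B(c_0, D_0)$.

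For the second claim I would apply the first direction of Lemma~\ref{lem:grid} to $T^*$, obtaining a tour $T^*_{\cal G}$ of the grid sets $\Gamma_i$ with $\mu(T^*_{\cal G}) \leq (1+\eps)L^*$. It then remains only to locate the vertices of $T^*_{\cal G}$. The construction of Lemma~\ref{lem:grid} attaches to each visited point $p_i \in P_i \cap T^*$ a detour to a grid point of $\Gamma_i$ at distance at most $\delta/\sqrt{2}$ from $p_i$; since each such $p_i$ lies in $W_0 \subseteq B(c_0, D_0)$, every newly created vertex lies within $B(c_0, D_0 + \delta/\sqrt{2})$, which is still $B(c_0, O(nD))$ because $\delta = \eps D/n$ is a negligible additive correction.

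Finally I would count the grid points of ${\cal G}$ falling inside this enlarged ball. With spacing $\delta = \eps D/n$, the number of grid lines crossing a window of side $O(nD)$ centered at $c_0$ is $O(nD/\delta) = O(nD \cdot n/(\eps D)) = O(n^2/\eps)$ in each coordinate direction. Hence all vertices of $T^*_{\cal G}$ can be enclosed in an $N$-by-$N$ array of grid points centered at (the grid point nearest) $c_0$ with $N = O(n^2/\eps)$, as claimed.

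I do not expect a genuine obstacle here, since both parts collapse to the closed-curve diameter estimate and the already-proved length guarantee of Lemma~\ref{lem:grid}. The only point requiring care is the interplay between the two bounds: one must confirm that the sub-$\delta$ outward displacement introduced by the grid detours does not enlarge the radius beyond $O(nD)$ and therefore does not disturb the asymptotic count $N = O(n^2/\eps)$ — which it does not, because $\delta/\sqrt{2}$ is dominated by $D_0$.
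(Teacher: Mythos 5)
Your argument is correct, but your proof of the first claim takes a genuinely different and more elementary route than the paper's. You bound the radius directly: any two points of a closed tour of length $L^*$ are within $L^*/2$ of each other, so each side of $W_0=BB(T^*)$ has length at most $L^*/2\le nD/2$ and every point of $T^*\subseteq W_0$ is within the diagonal $L^*/\sqrt{2}=O(nD)$ of $c_0\in W_0$. The paper instead introduces the minimum-diameter rectangle $R_{c_0}$ centered at $c_0$ that meets all regions, notes $D_{c_0}=O(nD)$, and argues by cases: if $T^*$ avoids $R_{c_0}$ then every region must cross $\partial R_{c_0}$, so $\partial R_{c_0}$ is a feasible tour and $L^*=O(D_{c_0})$; if $T^*$ enters $R_{c_0}$, an exchange argument (any excursion reaching distance $\sqrt{2}D_{c_0}$ from $R_{c_0}$ costs at least the perimeter of $R_{c_0}$ and can be replaced by a walk along that perimeter meeting the same regions) shows some optimal tour never strays more than $O(D_{c_0})$ from $c_0$. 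Your version is shorter, leans only on Lemma~\ref{lem:bound}, and applies to \emph{every} optimal tour whose bounding box contains $c_0$; the paper's version produces a possibly modified optimal tour (which is why the lemma says ``there exists'') but localizes around the quantity $D_{c_0}$, which is computable from $c_0$ and the input alone and may be far smaller than $nD$, giving a tighter grid in practice. Both deliver the stated $O(nD)$ radius, and your handling of the second claim --- pushing the $\delta/\sqrt{2}$ detours of Lemma~\ref{lem:grid} through the ball and counting $O(nD/\delta)=O(n^2/\eps)$ grid lines per axis --- coincides with the paper's.
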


\begin{proof}
Let $R_{c_0}$ be a minimum-diameter rectangle centered at $c_0$ that
intersects (or contains) every region of ${\cal R}$.  Since $R_{c_0}$
has diameter, $D_{c_0}$, no greater than twice the diameter of $W_0$,
we know, from Lemma~\ref{lem:bound}, that $D_{c_0}=O(nD)$.  If $T^*$ has no point within $R_0$,
then there can be no region of ${\cal R}$ interior to $R_{c_0}$; thus,
the boundary, $\partial R_{c_0}$, meets all regions of ${\cal R}$, so
we know that $L^*\leq |\partial R_{c_0}| = O(D_{c_0})=O(nD)$.  This
implies that all of $W_0$ (and thus all of $T^*$) lies within distance
$O(nD)$ of $c_0$.

If $T^*$ enters $R_{c_0}$, then, if the tour were to wander
substantially outside of $R_{c_0}$, say to a point $q$ at distance at
least $\sqrt{2}D_{c_0}$ from $R_{c_0}$, then that portion of the tour
(of length at least $2\sqrt{2}D_{c_0}$) connecting $q$ to the boundary
of $R_{c_0}$ can be replaced with a path along the boundary of
$R_{c_0}$ (whose perimeter is at most $2\sqrt{2}D_{c_0}$, since its
diameter is $D_{c_0}$), while meeting the same set of regions (by
connectivity of the regions).  Thus, $T^*$ does not venture more than
distance $O(D_{c_0})=O(nD)$ from the center, $c_0$, of $R_{c_0}$.

Since grid points of ${\cal G}$ are at spacing $\delta=\eps D/n$, we
see that a grid of size $N=O(nD/\delta)=O(n^2/\eps)$ suffices, and
Lemma~\ref{lem:grid} shows that $T^*$ can be rounded to ${\cal G}$.
\end{proof}

Let $W_0$ be the axis-aligned bounding box of an optimal tour $T^*$.
Let ${\cal R}_{W_0}$ be the subset of regions ${\cal R}$ that lie
entirely inside $W_0$. (Note that each of these regions has diameter
$O(nD)$, since $W_0$ has diameter at most $L^*\leq nD$, by Lemma~\ref{lem:bound}.)  
We partition this set ${\cal R}_{W_0}$ of regions into
$K=O(\log (nD/\delta))=O(\log(n/\eps))$ classes, according to the diameters being in
the intervals $(0,\delta), (\delta,2\delta), (2\delta,4\delta),
(4\delta,8\delta),\ldots(2^{K-2}\delta,2^{K-1}\delta)$.  

Note that, by the argument of Lemma~\ref{lem:grid}, the ``small''
regions in the size class $(0,\delta)$ can effectively be replaced
each by a single grid point, which we insist on the tour visiting;
from now on, we assume that this replacement of small regions has been
done.  Thus, we focus on regions in the size classes
$(2^{i-1}\delta,2^i\delta)$, for $i=1,2,\ldots,K-1$, whose largest
diameter is denoted $d_i=2^i\delta$.  (If there are no such non-small 
regions of ${\cal R}_{W_0}$, then our TSPN instance is easy to solve
approximately, as an instance of TSP on a point set, with the added constraint
of visiting the large regions $\bar{\cal R}$.)
Note too that fatness implies that no point lies in more than a constant number
of bounding boxes of the input regions of any one size class; thus, no point lies
in more than $O(K)=O(\log(n/\eps))$ bounding boxes of regions.

Let $T^* \oplus B(d_i)$ be the Minkowski sum of the ball $B(d_i)$ of
radius $d_i$ centered at the origin and the optimal tour $T^*$.  The
region $T^* \oplus B(d_i)$ is that swept by a ball of radius $d_i$
whose center follows the optimal tour~$T^*$.

\begin{lemma}
\label{lem:area-swept}  
  The area, $A_i$, of $(T^* \oplus B(d_i))\cap W_0$ is at most $2d_i
  L^*$.
\end{lemma}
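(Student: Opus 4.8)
The plan is to bound the area of the Minkowski-sum region $(T^* \oplus B(d_i)) \cap W_0$ by relating it to the length of the tour $T^*$ that generates it. The key observation is that $T^* \oplus B(d_i)$ is the region swept by a disk of radius $d_i$ whose center travels along the tour $T^*$, so I would think of this area as being ``covered'' by the moving disk and integrate the contribution along the tour.

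First I would decompose $T^*$ into its at most $n$ straight edges, using Lemma~\ref{lem:basic}. For a single segment $s$ of length $\ell$, the Minkowski sum $s \oplus B(d_i)$ is a ``stadium'' (a rectangle of dimensions $\ell \times 2d_i$ capped by two half-disks of radius $d_i$), whose total area is $2 d_i \ell + \pi d_i^2$. Summing the rectangular parts over all edges gives exactly $2 d_i \cdot \mu(T^*) = 2 d_i L^*$, which matches the claimed bound; the difficulty is in handling the disk caps and, more importantly, the overlaps between the swept regions of adjacent or nearby edges, since naive summation overcounts. The cleanest route is to argue that intersecting with $W_0$ discards the surplus: because $T^*$ is a closed tour of length $L^*$, at every point of the tour the disk extends distance $d_i$ in all directions, and the ``extra'' half-disk contributions at the two ends of each segment are, at each vertex, bounded by the turning behavior there, so they telescope against the tour returning on itself.

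The step I expect to be the main obstacle is justifying that the additive $\pi d_i^2$ cap terms and the overlap corrections do not push the total above $2 d_i L^*$. The way I would resolve this is via a projection or sweep-line argument rather than direct area summation: fix a direction and observe that for any line orthogonal to that direction, the intersection of that line with $T^* \oplus B(d_i)$ has length at most $2d_i$ plus the total length of $T^*$ measured in that direction; integrating over all such lines (i.e.\ using a Cauchy-type or Fubini argument on the area) and exploiting that $T^*$ is a closed curve so its signed extent in any direction cancels, I would recover the bound $2 d_i L^*$. Crucially, the intersection with $W_0$ is what lets me avoid counting the disk material that spills outside the bounding box, so the clean factor of $2$ (rather than something involving $\pi$) survives; I would make this precise by noting $W_0 = BB(T^*)$ contains the tour but the outward-bulging caps lie partly outside $W_0$ and are thus clipped away.

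A cleaner alternative I would also pursue, and likely prefer for the writeup, is the integral-geometric formulation: write the area as $\int_{T^*} w(p)\, dp$ where, at each point $p$ traversed along the tour, the disk contributes a strip of width at most $2d_i$ perpendicular to the direction of travel, and any point of the plane covered by the swept region is charged to the nearest point of $T^*$ within distance $d_i$. Restricting to $W_0$ ensures each covered point lies within the bounding box and hence is charged to tour points whose perpendicular strips have total width $2d_i$, giving $A_i \le 2 d_i \int_{T^*} dp = 2 d_i L^*$. The subtle point to verify carefully is that no point is charged with a ``width'' exceeding $2d_i$, which follows from the tour being a rectifiable curve of finite length and the disk radius being exactly $d_i$.
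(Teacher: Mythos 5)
Your overall plan is the right one, and your first computation already contains the correct intermediate bound: the tube $T^*\oplus B(d_i)$ has total area at most $2d_i L^* + \pi d_i^2$ (this standard fact is all the paper extracts from the edge decomposition). The genuine gap is in how you dispose of the residual $\pi d_i^2$, and each of your three attempts fails. The overlap/telescoping heuristic for adjacent stadiums cancels only $n-1$ of the $n$ cap terms around a closed polygon (the disks at the first and last shared vertex need not both be subtractable), leaving exactly $+\pi d_i^2$. The sweep-line claim that a line meets $T^*\oplus B(d_i)$ in length at most $2d_i$ plus the length of $T^*$ ``measured in that direction'' is not a correct statement as written. Most importantly, your preferred nearest-point charging argument asserts that no point is charged a width exceeding $2d_i$; this is false at every convex vertex of the polygon $T^*$, where an entire circular sector of $B(p,d_i)$ of positive area is charged to the single vertex $p$, and these sectors sum to $\pi d_i^2$ around the tour. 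Indeed, if that charging argument were valid it would prove that the \emph{entire} tube has area at most $2d_iL^*$ with no reference to $W_0$ at all---which is false (for a very short tour the tube has area close to $\pi d_i^2$, far exceeding $2d_iL^*$). Restricting to points inside $W_0$ does not repair this, since a convex vertex of $T^*$ in the interior of $W_0$ keeps its full sector inside $W_0$.

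What is missing is a concrete identification of $\pi d_i^2$ worth of tube area that provably lies \emph{outside} $W_0$. The paper's proof supplies exactly this: since $W_0=BB(T^*)$ is tight, $T^*$ touches both the left and the right wall of $W_0$, at points $p_L$ and $p_R$; the left half of the disk $B(p_L,d_i)$ and the right half of the disk $B(p_R,d_i)$ are disjoint subsets of $T^*\oplus B(d_i)$ lying entirely outside $W_0$, with combined area $\pi d_i^2$. Hence $A_i \le (2d_iL^*+\pi d_i^2)-\pi d_i^2 = 2d_iL^*$. You correctly intuited that clipping by $W_0$ is where the $\pi d_i^2$ must go, but you never exhibit the specific clipped area, and that one observation is the entire content of the lemma beyond the standard tube bound.
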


\begin{proof}
  The area of $T^* \oplus B(d_i)$ is at most $2d_i L^* + \pi d_i^2$.
  Since $W_0$ is a (tight-fitting) bounding box of $T^*$, there is
  some point $p_L\in T^*\cap \partial W_0$ on the left wall of $W_0$
  and some point $p_R\in T^*\cap \partial W_0$ on the right wall of
  $W_0$.  The left half-disk of the radius-$d_i$ ball centered at
  $p_L$ and the right half-disk of the radius-$d_i$ ball centered at
  $p_R$ both lie outside of $W_0$; thus, the area $A_i$ does not
  include (at least) area $\pi d_i^2$ of the region $T^* \oplus
  B(d_i)$.  Thus, $A_i\leq 2d_i L^* + \pi d_i^2-\pi d_i^2=2d_i L^*$.
\end{proof}

From Lemma~\ref{lem:area-swept}, we know that $L^* \geq A_i/2d_i$, for
each $i=1,\ldots,K-1$.  We also know that the sum of the areas of the
$n_i$ regions of class $i$ is at most $A_i$, since the (disjoint)
regions all must lie fully within the Minkowski sum $T^* \oplus
B(d_i)$.  Since each region of class $i$ has diameter at least $d_i/2$, 
and, by fatness, has area $\Omega(d_i^2)$,
we get that $A_i\geq C_0 d_i^2 n_i$, for an appropriate constant
$C_0$.  Thus, $L^*\geq (C_0/2)d_i n_i$, for each $i$.  Summing
on $i$, we get $K L^*\geq (C_0/2)\sum_i d_i n_i$.  This implies that
$L^*\geq C\cdot \lambda({\cal R}_{W_0})/\log (n/\eps)$, for
some constant $C$, where $\lambda({\cal R}_{W_0})$ is the sum of the
diameters of the regions ${\cal R}_{W_0}$.

\begin{lemma}
\label{lem:lower-bnd}
  $L^*\geq C\cdot \lambda({\cal R}_{W_0})/\log (n/\eps)$, for some
  constant~$C$.
\end{lemma}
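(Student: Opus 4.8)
The plan is to prove the bound one diameter-size class at a time and then sum over all $K=O(\log(n/\eps))$ classes, so that the logarithmic factor appears precisely as the number of classes. Fix a class $i$ (for $i=1,\ldots,K-1$), whose $n_i$ regions have diameters in $(2^{i-1}\delta,2^i\delta]$, so each has diameter at least $d_i/2$. The first observation I would make is that every region $P$ of class $i$ lies entirely inside $(T^*\oplus B(d_i))\cap W_0$: since $T^*$ visits $P$, some point of $T^*$ lies in $P$, and since $diam(P)\le d_i$, all of $P$ lies within distance $d_i$ of that point, hence inside the Minkowski sum; moreover $P\subseteq W_0$ because $P\in{\cal R}_{W_0}$. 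As the regions of ${\cal R}$ are pairwise disjoint, the sum of the areas of the class-$i$ regions is at most the area $A_i$ of this swept set.

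Next I would invoke fatness. By definition each region of class $i$ has $area(P)\ge\alpha\,[diam(P)]^2\ge\alpha(d_i/2)^2$, so the $n_i$ disjoint regions occupy area at least $C_0 d_i^2 n_i$ with $C_0=\alpha/4$. Combining this area-packing lower bound with the upper bound $A_i\le 2d_i L^*$ of Lemma~\ref{lem:area-swept} yields the per-class inequality $2d_i L^*\ge A_i\ge C_0 d_i^2 n_i$, i.e. $L^*\ge (C_0/2)\,d_i n_i$. This packing step is the heart of the argument and the one place where the weak notion of fatness is essential: it is exactly what converts ``many disjoint regions at scale $d_i$'' into ``large swept area,'' and hence into a lower bound on tour length.

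Finally I would sum the per-class inequalities over $i=1,\ldots,K-1$ to get $(K-1)L^*\ge (C_0/2)\sum_i d_i n_i$. Since $d_i=2^i\delta\ge diam(P)$ for every $P$ in class $i$, the sum $\sum_i d_i n_i$ dominates the total diameter $\lambda({\cal R}_{W_0})$ of the non-small regions, so $(K-1)L^*\ge (C_0/2)\,\lambda({\cal R}_{W_0})$. Dividing by $K-1=O(\log(n/\eps))$ gives $L^*\ge C\cdot\lambda({\cal R}_{W_0})/\log(n/\eps)$, as claimed.

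The main obstacle — or rather, the reason the bound carries a logarithmic factor instead of a constant — is this final summation: each class separately gives a clean bound $L^*=\Omega(d_i n_i)$, but aggregating all classes into a single bound on the total diameter forces us to add the per-class inequalities, and in the worst case one class alone could already account for all of $L^*$, so we pay a factor equal to the number $K$ of classes. The one detail I would be careful about is the ``small'' regions in class $(0,\delta)$: these are handled separately (each replaced by a grid point the tour must visit), and since there are at most $n$ of them, each of diameter below $\delta=\eps D/n$, their total diameter is at most $\eps D\le (\eps/2)L^*$ by Lemma~\ref{lem:bound}, which is negligible and can be absorbed into the constant $C$.
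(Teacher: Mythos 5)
Your proof is correct and follows essentially the same route as the paper: partition the internal regions into $O(\log(n/\eps))$ diameter classes, use disjointness plus fatness to pack $\Omega(d_i^2 n_i)$ of area into the swept region $(T^*\oplus B(d_i))\cap W_0$ bounded by Lemma~\ref{lem:area-swept}, derive $L^*\geq (C_0/2)d_i n_i$ per class, and sum over the classes. Your explicit treatment of the small class $(0,\delta)$ is a sensible addition consistent with the paper's earlier remark that those regions are replaced by single grid points.
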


\remarks 

(1). The bound of Lemma~\ref{lem:lower-bnd} can be improved to
  $L^*\geq C\cdot \lambda({\cal R}_{W_0})/(\log n)$, by the following
observation (thanks to Khaled Elbassioni and Rene Sitters):
It suffices for the bound to consider only disks of
diameter greater than $D/n$, since the sum of all diameters of regions
with diameter less than $D/n$ is at most $D\leq L^*/2$.  Thus, it suffices to
consider regions within the range of diameters $[D/n,D]$, for which there are only $O(\log n)$ intervals
$(2^{i-1}\delta,2^i\delta)$.

(2). The bound of Lemma~\ref{lem:lower-bnd} is asymptotically
tight, as can be seen in Figure~\ref{fig:tight}.

\begin{figure}[htbp]
 \centering{\includegraphics[height=1.5in]{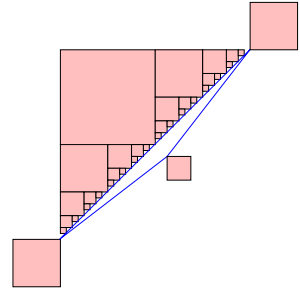}}
\caption{An optimal tour on a set of disjoint squares
for which $L^*= C\cdot \lambda({\cal R}_{W_0})/\log n$.}
 \label{fig:tight}
 \end{figure}

\section{Approximation Scheme}

It is natural to suspect that the same techniques that yield a PTAS
for geometric TSP on points may apply to the TSPN.  The basic issue we
must address in order to apply these techniques is to be able to write
a recursion to solve an appropriate ``succinct'' subproblem with
dynamic programming.  

What should a subproblem be ``responsible'' to solve?  For the TSP on
{\em point} data, the subproblem is responsible for constructing an
inexpensive network (of a particular special structure) on the points
that are {\em inside} the subproblem rectangle, and to interconnect
this network with the boundary in some nicely controlled way (e.g.,
with only a constant complexity of connection, in the case of
$m$-guillotine methods).  The problem with the TSPN is that the {\em
regions} can cross subproblem boundaries, making it difficult to
specify which of the regions is subproblem responsible to visit versus
which of the regions are visited {\em outside} the subproblem.  We
cannot afford to enumerate subproblems corresponding to all possible
subsets of regions that cross the boundary of the rectangle.

Our new idea is to introduce an extension of the general method of
$m$-guillotine subdivisions to $m$-guillotine subdivisions that
include not only a ``bridge'' for the ``$m$-span'' of each cut, but
also a ``region-bridge'' for the ``$M$-region-span'' of the set of fat
regions, with $M=O((1/\eps)\log (n/\eps))$.  The standard analysis of the
$m$-guillotine method allows us to charge off the construction cost of
the $m$-spans, while the new analysis we gave in
Lemmas~\ref{lem:basic}-\ref{lem:lower-bnd} allows us to charge off the
additional cost associated with the $M$-region-span to the sum of the
sizes (diameters) of the regions.  A similar idea, using a
``disk-span'' was employed in the PTAS of \cite{dm-aatsp-03}; however,
the novelty of our analysis is that we are able to avoid the
requirement of equal-size regions, as in \cite{dm-aatsp-03}, by
partitioning the regions into two types (those fully inside $W_0$ and
those that lie partially outside $W_0$), and by analyzing separately
the two types of spans, allowing the $M$ to be logarithmic in~$n$.

For each of the $M$ regions that crosses a cut, we can afford (since
$2^{O(M)}$ is polynomial in $n$) to specify, as part of the corresponding
subproblem, which regions are to be visited inside the subproblem.  It
is key that $M$ is only logarithmic in $n$; our lower bound on the
tour length (Lemma~\ref{lem:lower-bnd}) is ``just right'', in that it
gives us this logarithmic factor.

We review some definitions, largely following the notation of
\cite{m-gsaps-99}.  Let $G$ be an embedding of a planar graph, and let
$L$ denote the total Euclidean length of its edges, $E$. We can assume
(without loss of generality) that $G$ is restricted to the unit
square, $B$ (i.e., $E\subset int(B)$), and that the vertices of $G$
lie at grid points ${\cal G}$.

Consider an axis-aligned rectangle $W$ (a {\em window}) with
$W\subseteq B$ and with corners at grid points.  ($W$ will correspond
to a subproblem in a dynamic programming algorithm.)  Let $\ell$ be an
axis-parallel line, through grid points, intersecting $W$. We refer to
$\ell$ as a {\em cut} for $W$.
We will refer to a {\em root window}, $W_0$, which is a window that is
hypothesized to be the minimal enclosing bounding box of an optimal
grid solution, $T^*_{\cal G}$; as such, $W_0$ necessarily intersects
or contains every region of ${\cal R}$.  All windows $W$ of interest
will then be subwindows of $W_0$.  By Lemma~\ref{lem:localization}, we
know that there are only a polynomial number ($O((n/\eps)^4)$) of
possible choices for $W_0$; we can afford to try each one.  With
respect to a fixed choice of $W_0$, we let ${\cal R}_{W_0}$ denote
those {\em internal regions} that are entirely contained within $W_0$,
and we let $\bar{\cal R}_{W_0}={\cal R}\setminus {\cal R}_{W_0}$
denote the {\em border regions} that are not contained in $W_0$ (but
do meet the boundary of $W_0$).

The intersection, $\ell\cap (E\cap int(W))$, of a cut $\ell$ with
$E\cap int(W)$ (the restriction of $E$ to the window $W$) consists of
a (possibly empty) set of subsegments (possibly singleton
points) of $\ell$.  Let $\xi$ be the number of endpoints of such
subsegments along $\ell$, and let the points be denoted by
$p_1,\ldots,p_{\xi}$, in order along~$\ell$.  For a positive integer
$m$, we define the {\em $m$-span}, $\sigma_m(\ell)$, of $\ell$ (with
respect to $W$) as follows.  If $\xi\leq 2(m-1)$, then
$\sigma_m(\ell)=\emptyset$; otherwise, $\sigma_m(\ell)$ is defined to
be the (possibly zero-length) line segment, $p_{m}p_{\xi-m+1}$,
joining the $m$th endpoint, $p_m$, with the $m$th-from-the-last
endpoints, $p_{\xi-m+1}$.
Line $\ell$ is an {\em $m$-good cut with respect to $W$ and $E$} if
$\sigma_m(\ell)\subseteq E$.  (In particular, if $\xi\leq 2(m-1)$, 
then $\ell$ is trivially an $m$-good cut.)

The intersection of the cutting segment $ab=\ell\cap W$ with the 
bounding boxes of the 
internal regions ${\cal R}_{W_0}$ restricted to $W$
consists of a (possibly empty) set of subsegments. 
Let $\xi$ denote the number of bounding boxes of regions ${\cal R}_{W_0}$ 
that segment $ab$ crosses.  (Note that the endpoints, $a$ and $b$, can each lie
inside $O(K)=O(\log (n/\eps))$ bounding boxes; these ``corner boxes'' are not
counted among the $\xi$ boxes that $ab$ crosses.)
We define the {\em $M$-region-span}, $\Sigma_M(\ell)$, of $\ell$
analogously to the $m$-span: If $\xi<2M-1$, then $\Sigma_M(\ell)$ is
defined to be empty; otherwise, if $\xi\geq 2M-1$, then
$\Sigma_M(\ell)$ is the line segment $a_Mb_M$, along $\ell$, with
$a_M$ defined to be the $M$th entry point where segment $ab$ enters a
bounding box, when going from $a$ towards $b$ along $ab$, and $b_M$
defined similarly to be the $M$th entry point where segment $ab$
enters a bounding box, when going from $b$ towards $a$ along $ab$.
Line $\ell$ is an {\em $M$-good cut with
respect to $W$, $E$, and ${\cal R}_{W_0}$} if $\Sigma_M(\ell)\subseteq E$.

We now say that $E$ satisfies the {\em $(m,M)$-guillotine property
  with respect to window $W$ and regions ${\cal R}_{W_0}$} if either
(1) no edge of $E$ lies (completely) interior to $W$; or (2) there
exists a cut $\ell$, that is $m$-good with respect to $W$ and $E$ 
and $M$-good with respect to $W$, $E$, and ${\cal
  R}_{W_0}$, that splits $W$ into $W_1$ and $W_2$, and, recursively,
$E$ satisfies the $(m,M)$-guillotine property with respect to both
$W_1$ and~$W_2$, and regions ${\cal R}_{W_0}$.

We say that a point $p\in W$ is {\em $m$-dark with respect to horizontal cuts of $W$}
if the vertical rays going upwards/downwards from $p$ each cross at least $m$
edges of $E$ before reaching the boundary of $W$.
Similarly, we say that a point $p\in W$ is {\em $M$-region-dark with respect to horizontal cuts of $W$}
if the vertical rays going upwards/downwards from $p$ each cross at least $M$
bounding boxes of regions of ${\cal R}_{W_0}$ before reaching the
boundary of~$W$.  As in \cite{m-gsaps-99}, the length of the $m$-dark
portion of a cut is the ``chargeable'' length of the cut that is
chargeable to the lengths of the $m$ layers of $E$ on each side of the
cut that become ``exposed'' after the cut.
Similarly, the length of the $M$-region-dark portion of a cut is the chargeable length
of the cut that is chargeable to the $M$ layers of bounding boxes on each side of the cut that
become exposed after the cut. 

Given an edge set $E$ of a connected planar graph $G$, if $E$ is not
already satisfying the $(m,M)$-guillotine property with respect to
$W_0$ and regions ${\cal R}_{W_0}$, then we argue, by the standard
guillotine argument of \cite{m-gsaps-99}, that there exists a
``favorable cut'' for which we can afford to charge off (to the
edges of $E$ and the edges of the bounding boxes of regions) the
construction of any $m$-span or $M$-region-span that must be added to
$E$ in order to make the cut both $m$-good with respect to $W$ and $E$ and
$M$-good with respect to $W$, $E$, and~${\cal R}_{W_0}$:

\begin{lemma}
\label{lem:key}
For any $G$ and any window $W$, there is a favorable cut.
\end{lemma}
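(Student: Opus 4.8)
The plan is to adapt the standard $m$-guillotine charging argument of Mitchell~\cite{m-gsaps-99} to simultaneously handle both the edge set $E$ and the bounding boxes of the internal regions ${\cal R}_{W_0}$. The goal is a cut $\ell$ for $W$ that is both $m$-good (so $\sigma_m(\ell)\subseteq E$) and $M$-good (so $\Sigma_M(\ell)\subseteq E$), where the combined cost of the $m$-span plus the $M$-region-span that must be \emph{added} to $E$ can be amortized against length already present. I would define the cost of a candidate cut to be the length of its $m$-span plus the length of its $M$-region-span, and argue that some cut has cost small enough to be charged off to the ``exposed'' layers it creates.

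First I would recall the dark-length accounting: for horizontal cuts, the total horizontal length of the $m$-dark portion (points whose upward and downward vertical rays each cross $\ge m$ edges of $E$) is, by a standard integration/counting argument, at most $L/m$ times a constant, and symmetrically for vertical cuts; likewise the $M$-region-dark length integrates against the total perimeter of the bounding boxes, which by Lemma~\ref{lem:lower-bnd} is $O(\lambda({\cal R}_{W_0}))=O(L^*\log(n/\eps))$. The key geometric observation, exactly as in \cite{m-gsaps-99}, is that when a cut $\ell$ fails to be $m$-good, the two ``missed'' endpoints $p_m$ and $p_{\xi-m+1}$ bounding $\sigma_m(\ell)$ are each endpoints of at least $m$ edge-crossings to either side, so the $m$-span length is dominated by the dark length of a \emph{perpendicular} family of cuts through those points; the identical statement holds for $\Sigma_M(\ell)$ against $M$-region-dark length. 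Summing the span lengths over all candidate horizontal and vertical cuts therefore bounds the total by a constant times $(L/m + \lambda({\cal R}_{W_0})/M)$.

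With $M=O((1/\eps)\log(n/\eps))$ and $\lambda({\cal R}_{W_0})=O(L^*\log(n/\eps))$, the region-span contribution telescopes to $O(\eps L^*)$, matching the $O(\eps L/m)$ contribution from the edge-spans; thus the \emph{average} cost of a cut, over the appropriate weighting, is a small fraction of the length/perimeter it exposes. By an averaging (pigeonhole) argument over the $O(|W|/\delta)$ candidate grid-aligned cuts in each direction, there must exist at least one cut whose added span cost is at most a $(1/m)+O(1/M)$ fraction of the length of $E$ and bounding-box perimeter it renders exposed --- this is the favorable cut. I would make ``favorable'' precise by charging the $m$-span to the $2m$ exposed edge-layers and the $M$-region-span to the $2M$ exposed box-layers, so that each unit of length in $E$ (and each unit of box perimeter) is charged only $O(1)$ times across the whole recursion.

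The main obstacle, and the genuinely new part relative to \cite{m-gsaps-99}, is the interaction between the two charging schemes: a single cut must be \emph{simultaneously} favorable for both the $m$-span and the $M$-region-span, and one must ensure the two amortization accounts do not collide or double-charge. I would handle this by keeping the two budgets separate --- edge-length charged only to edge-layers, box-perimeter charged only to box-layers --- and observing that ``$m$-good'' and ``$M$-good'' are each monotone, downward-closed conditions on the choice of cut position, so the set of cuts that are favorable for $E$ and the set favorable for ${\cal R}_{W_0}$ each have positive measure under the averaging; the delicate point is verifying that these two favorable sets \emph{intersect}, which follows because the total forbidden (non-favorable) length in each direction is a strict fraction of the available cut range once $m$ and $M$ are chosen large enough, leaving a common favorable cut by a union-bound argument. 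A secondary technical care is the treatment of the $O(K)=O(\log(n/\eps))$ ``corner boxes'' at the cut endpoints $a,b$, which are excluded from the $\xi$ count and must be shown not to disrupt the perimeter-charging when windows are split recursively.
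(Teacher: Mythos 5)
Your overall framing (define the cost of a cut as $|\sigma_m(\ell)|+|\Sigma_M(\ell)|$ and find a cut whose cost can be charged to what it exposes) matches the paper's, but your existence argument has a genuine gap in two places. First, the bounds you actually establish --- that the span costs summed over all cuts total $O(L/m+\lambda({\cal R}_{W_0})/M)$, together with the appeal to Lemma~\ref{lem:lower-bnd} --- are ingredients of the \emph{global} charging analysis behind Theorem~\ref{thm:main-guil}, not of this lemma. Lemma~\ref{lem:key} requires a pointwise statement: a single cut $\ell$ whose own chargeable (dark) length is at least its own span cost. Knowing that the average cost is small and that the exposed length is large in aggregate does not produce a cut on which the two quantities are compared; they must be compared as functions of the \emph{same} cut. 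The paper does this with a Fubini/averaging step: $A_x=\int f(x)\,dx$, the total cost of vertical cuts, equals the area of the $m$-dark plus $M$-region-dark regions with respect to horizontal cuts, which re-slices as $\int h(y)\,dy$, the total chargeable length of horizontal cuts; assuming w.l.o.g.\ $A_x\geq A_y=\int g(y)\,dy$, some $y^*$ satisfies $h(y^*)\geq g(y^*)$, and that horizontal cut is favorable. Your ``key geometric observation'' gestures at the re-slicing identity but never states the $A_x$ versus $A_y$ comparison or the choice of cut direction that it forces, and that comparison is essentially the whole content of the lemma.

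Second, your proposed handling of the ``interaction'' between the two accounts --- seeking a cut that is \emph{separately} favorable for the $m$-span and for the $M$-region-span, and arguing that the two favorable sets intersect because each forbids only a strict fraction of the cut range ``once $m$ and $M$ are chosen large enough'' --- is both unnecessary and unsupported. It is unnecessary because the paper defines one combined cost ($|\sigma_m(\ell)|+|\Sigma_M(\ell)|$) and one combined chargeable length ($m$-dark plus $M$-region-dark), so a single inequality suffices; the edge-dark portion is still charged only to edge layers and the region-dark portion only to bounding-box layers, so the accounts never collide even though neither need be favorable on its own. It is unsupported because there is no lower bound on the measure of either individually favorable set: already for the plain $m$-span, a fixed direction may contain \emph{no} favorable cut whatsoever (which is precisely why the direction is chosen by comparing $A_x$ with $A_y$), the lemma must hold for all $m,M\geq 1$ rather than only large values, and ``$m$-good'' is not monotone in the cut position in any sense that would make your measure and union-bound claims go through.
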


\begin{proof}
We show that there must be a favorable cut that
is either horizontal or vertical.

Let $f(x)$ denote the ``cost'' of
the vertical line, $\ell_x$, through $x$,
where ``cost'' means the sum of the lengths of the
$m$-span and the $M$-region-span for $\ell_x$; thus,
$f(x)=|\sigma_m(\ell_x)| + |\Sigma_M(\ell_x)|$.

Then, $A_x=\int_0^1 f(x) dx$ is simply the area, $A^{(m)}_x=\int_0^1
|\sigma_m(\ell_x)| dx$, of the ($x$-monotone) region $R^{(m)}_x$ of
points of $B$ that are $m$-dark with respect to horizontal cuts, plus
the area, $A^{(M)}_x=\int_0^1 |\Sigma_M(\ell_x)| dx$, of
the ($x$-monotone) region $R^{(M)}_x$ of points of $B$ that
are $M$-region-dark with respect to horizontal cuts.  Similarly, define
$g(y)$ to be the cost of the horizontal line through $y$, and let
$A_y=\int_0^1 g(y) dy$.

Assume, without loss of generality, that $A_x\geq A_y$.
We claim that there exists a horizontal favorable cut; i.e., we claim
that there exists a horizontal cut, $\ell$, such that its chargeable
length (i.e., length of its $m$-dark portion plus its $M$-region-dark
portion) is at least as large as the cost of $\ell$
($|\sigma_m(\ell)|+|\sigma_M(\ell)|$).  To see this, note
that $A_x$ can be computed by switching the order of integration,
``slicing'' the regions $R^{(m)}_x$ and $R^{(M)}_x$
horizontally, rather than vertically; i.e., $A_x=\int_0^1 h(y)
dy=\int_0^1 h_m(y) dy+\int_0^1 h_{M}(y) dy$, where $h_m(y)$
is the $m$-dark length of the horizontal line through $y$, $h_{M}(y)$ 
is the length of the intersection of $R^{(M)}_x$
with a horizontal line through $y$, and $h(y)$ is the chargeable
length of the horizontal line through $y$.  (In other words, $h_m(y)$
(resp., $h_{M}(y)$) is the length of the $m$-dark (resp.,
$M$-region-dark) portion of the horizontal line through $y$.)  Thus,
since $A_x\geq A_y$, we get that $\int_0^1 h(y) dy \geq \int_0^1 g(y)
dy\geq 0$.  Thus, it cannot be that for all values of $y\in[0,1]$,
$h(y)<g(y)$, so there exists a $y=y^*$ for which $h(y^*)\geq g(y^*)$.
The horizontal line through this $y^*$ is a cut satisfying the claim
of the lemma. (If, instead, we had $A_x\leq A_y$, then we would get a
{\em vertical} cut satisfying the claim.)
\end{proof}

The charging scheme assigns a charge to the edges of $E$ of total
amount equal to (roughly) $1/m$th of the length of $E$, and it assigns a charge
to the edges of the bounding boxes of regions ${\cal R}_{W_0}$ of
total amount equal to (roughly) $1/M$th of the diameters of the regions.  (Note
that the diameter/perimeter of each bounding box is proportional to
the diameter of the corresponding region.)  We therefore have shown
the following structure theorem:

\begin{theorem}
\label{thm:main-guil}
  Let $G$ be an embedded connected planar graph, with edge set $E$
  consisting of line segments of total length $L$.  Let ${\cal R}$ be
  a set of disjoint fat regions and assume that $E\cap P_i\neq
  \emptyset$ for every $P_i\in {\cal R}$.  Let $W_0$ be the
  axis-aligned bounding box of $E$.  Then, for any positive integers
  $m$ and $M$, there exists an edge set $E'\supseteq E$ that obeys the
  $(m,M)$-guillotine property with respect to window $W_0$ and regions
  ${\cal R}_{W_0}$ and for which the length of $E'$ is at most
  $L+{\sqrt{2}\over m}L+{\sqrt{2}\over M}\lambda({\cal R}_{W_0})$,
  where $\lambda({\cal R}_{W_0})$ is the sum of the diameters
  of the regions ${\cal R}_{W_0}$.
\end{theorem}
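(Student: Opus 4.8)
The plan is to build $E'$ by a top-down recursive guillotine decomposition driven entirely by Lemma~\ref{lem:key}. Starting from the pair $(E,W_0)$, I would test the base case: if no edge of the current edge set lies interior to the window $W$, then $W$ already satisfies the $(m,M)$-guillotine property and I stop. Otherwise I invoke Lemma~\ref{lem:key} to obtain a favorable cut $\ell$ for $W$; since all edge endpoints and all bounding-box sides of ${\cal R}_{W_0}$ lie at grid points, the $m$-dark and $M$-region-dark lengths are piecewise constant between consecutive grid lines, so $\ell$ may be taken through grid points. I then add the two segments $\sigma_m(\ell)$ and $\Sigma_M(\ell)$ to the current edge set, which by definition makes $\ell$ both $m$-good (with respect to the augmented $E$) and $M$-good (with respect to ${\cal R}_{W_0}$), split $W$ into $W_1,W_2$ along $\ell$, and recurse on each. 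Because every cut passes through interior grid lines, the subwindows strictly shrink and the recursion terminates; and because each cut is simultaneously $m$-good and $M$-good by construction, the resulting $E'\supseteq E$ satisfies the $(m,M)$-guillotine property with respect to $W_0$ and ${\cal R}_{W_0}$ by definition.

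It remains to bound $\mu(E')$. Since the construction only adds span segments, $\mu(E')=L+\sum_{\ell}\bigl(|\sigma_m(\ell)|+|\Sigma_M(\ell)|\bigr)$, the sum taken over all cuts $\ell$ used in the recursion. By the favorable-cut property of Lemma~\ref{lem:key}, at each cut the added length $|\sigma_m(\ell)|+|\Sigma_M(\ell)|$ is at most the chargeable length of $\ell$, namely the length of its $m$-dark portion plus the length of its $M$-region-dark portion. Hence the total added length is at most the total chargeable length accumulated over the recursion, and I would bound the two contributions separately: (A) the total $m$-dark charge, against the edges of $E$, and (B) the total $M$-region-dark charge, against the bounding boxes of ${\cal R}_{W_0}$.

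For (A), I would invoke the standard charging of \cite{m-gsaps-99}. Each time a (say horizontal) cut is made, its $m$-dark length is assigned to the $m$ layers of edges of $E$ lying immediately above and below it; the crucial point is that the previously inserted spans lie exactly on window boundaries and therefore never lie interior to a subwindow, so they are never recharged. An infinitesimal piece of an edge of $E$ making angle $\theta$ with the horizontal contributes to the horizontal- and vertical-dark regions a total length of at most $(\cos\theta+\sin\theta)/m\le \sqrt{2}/m$ times its own length, and choosing at each node the cut direction of larger total dark area (as in Lemma~\ref{lem:key}) only helps; summing over $E$ yields total $m$-span length at most $\tfrac{\sqrt{2}}{m}L$.

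Part (B) is the novel ingredient and, I expect, the main obstacle. The idea is to run the identical layered-darkness charging a second time, now treating the boundaries of the bounding boxes of ${\cal R}_{W_0}$ as a second ``edge set'' and charging the $M$-region-dark portion of each cut to the $M$ layers of bounding-box boundaries on either side. Two features must be verified. First, regions are fixed throughout: no bounding box is ever created by a cut (only $E$ grows), so region-darkness is always measured against the original family of boxes and the scheme (B) is completely independent of scheme (A); this separation is exactly what permits $M$ to be taken logarithmic in $n$ independently of $m$. Second, I must show no box is overcharged: each box receives at most $O(1/M)$ of its perimeter, and one must control the overlaps of bounding boxes and the excluded ``corner boxes'' (each endpoint of a cut may lie in up to $O(K)=O(\log(n/\eps))$ boxes), using the bounded-depth consequence of fatness noted before Lemma~\ref{lem:area-swept}. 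Since fatness forces each $BB(P_i)$ to have bounded aspect ratio, its perimeter is $O(\mathrm{diam}(P_i))$, so the accumulated region-span length is at most $\tfrac{\sqrt{2}}{M}\lambda({\cal R}_{W_0})$. Combining (A) and (B) gives $\mu(E')\le L+\tfrac{\sqrt{2}}{m}L+\tfrac{\sqrt{2}}{M}\lambda({\cal R}_{W_0})$, as claimed.
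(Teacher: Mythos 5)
Your proposal follows essentially the same route as the paper: a recursive decomposition driven by the favorable-cut lemma (Lemma~\ref{lem:key}), with the added $m$-spans charged to the edges of $E$ at rate $O(1/m)$ and the added $M$-region-spans charged to the bounding-box boundaries of ${\cal R}_{W_0}$ at rate $O(1/M)$, using fatness only to relate box perimeters to region diameters. The paper itself proves the theorem exactly by this two-part charging scheme (stated in the paragraph preceding the theorem), so your argument is correct and matches it, with your part (B) merely spelling out details the paper leaves implicit.
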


In the next section we will give a dynamic programming algorithm to
compute a minimum-length $(m,M)$-guillotine edge set that obeys
certain constraints.  The algorithm works on a discrete
polynomial-size grid, ${\cal G}$, corresponding to the regular grid of
resolution $\delta$ within a (grid-rounded) axis-aligned box $W_0$
that is hypothesized to be the bounding box of an optimal tour.  For a
given edge set $E$, whose edges have endpoints on the grid (as we can
assume is the case for an approximately optimal tour, by
Lemma~\ref{lem:grid}), the proof of the above theorem can be applied
to the {\em grid encasement} of each edge $e\in E$: the {\em
encasement} of $e$ is defined to be the (rectilinear) simple polygon
$Q_e$ consisting of the union of grid cells whose interiors intersect
$e$.  Note that $Q_e$ lies within $BB(e)$ (since the endpoints of $e$
lie on the grid) and that the perimeter of $Q_e$ is at most
$2\sqrt{2}\cdot |e|$, where $|e|$ denotes the Euclidean length of $e$.
(If a set of encasements is spanned, then the corresponding set of
edges is also spanned, since each edge is contained within its
encasement, implying that the span has been rounded outwards.)  Also,
since the regions $P_i$ can be replaced by the grid sets $\Gamma_i$
(Lemma~\ref{lem:grid}), the bounding boxes $BB(\Gamma_i)$ lie on the
grid, and the $M$-region-spans also lie on the grid.  Then, the proof
of Theorem~\ref{thm:main-guil} applies to $\Gamma_i$ and the edge set $\bar
E$ consisting of the (horizontal/vertical) edges bounding all
encasements $Q_e$, for which the functions $f$ and $g$ are
piecewise-constant on the grid, implying that the $m$-spans that are
added to $E$ are always vertical/horizontal segments with endpoints on
the grid.

We note that the edge set $E'$ guaranteed in the above argument need
not be connected (as is the case for $E$), since the region-spans that
we add (and charge off in the charging scheme) may not intersect
intersect edges of $E$.  This issue is readily addressed, as we
describe in the next section (see also \cite{dm-aatsp-03}).

Further, the proof of the above theorem shows also that we can afford
to double (or replicate any constant number of times) the $m$-spans
and $M$-spans that are added to $E$ to obtain an augmented edge set
with the $(m,M)$-guillotine property.  We exploit this fact in making
the usual ``bridge doubling'' argument (see \cite{m-gsaps-99}) that allows
the network we compute with the dynamic program of the next section to
contain an Eulerian subgraph, from which a tour is extracted.

The main result of this paper is summarized in the following theorem:

\begin{theorem}
\label{cor:ptas}    
The TSPN for a set of disjoint fat regions has a PTAS.  
\end{theorem}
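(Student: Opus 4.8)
The plan is to combine the structure theorem (Theorem~\ref{thm:main-guil}) with a dynamic program that computes a minimum-length $(m,M)$-guillotine network visiting every region, over the polynomial-size grid guaranteed by Lemma~\ref{lem:localization}, and then to extract a tour from that network. The overall scheme mirrors the Arora/Mitchell PTAS for point TSP, with the $M$-region-span supplying the one new ingredient needed to cope with regions that straddle subproblem boundaries.

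First I would fix parameters: given the target $\eps>0$, set $m=\Theta(1/\eps)$ and $M=\Theta((1/\eps)\log(n/\eps))$. By Lemma~\ref{lem:grid} and Lemma~\ref{lem:localization}, replace $T^*$ by a grid-rounded near-optimal tour $T^*_{\cal G}$ of length at most $(1+\eps)L^*$, whose vertices lie in an $N$-by-$N$ array of grid points with $N=O(n^2/\eps)$. Applying Theorem~\ref{thm:main-guil} to the grid encasement of $T^*_{\cal G}$ then produces an edge set $E'$ obeying the $(m,M)$-guillotine property with
\[
\mu(E')\le (1+\eps)L^* + \frac{\sqrt2}{m}(1+\eps)L^* + \frac{\sqrt2}{M}\lambda({\cal R}_{W_0}).
\]
The middle term is $O(\eps)L^*$ by the choice of $m$; the last term is $O(\eps)L^*$ by the choice of $M$ together with the lower bound $L^*\ge C\lambda({\cal R}_{W_0})/\log(n/\eps)$ of Lemma~\ref{lem:lower-bnd}. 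Hence a $(1+O(\eps))$-approximate tour may be assumed to be $(m,M)$-guillotine, and rescaling $\eps$ at the end yields the claimed ratio.

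Next I would set up the dynamic program to find the cheapest $(m,M)$-guillotine network visiting all regions. A subproblem is a window $W$ (with corners among the $O(N^2)$ grid points) together with, for each side, the $O(m)$ crossing points and the constant-size connection pattern of the $m$-span bridge, exactly as in the point-TSP $m$-guillotine DP of \cite{m-gsaps-99}. The new ingredient is a \emph{responsibility assignment}: for the $O(M)$ internal regions whose bounding boxes straddle a side of $W$ but are not already absorbed by the $M$-region-span, a single bit records whether the region is visited inside $W$ or outside. The recursion guesses the $(m,M)$-good cut $\ell$ splitting $W$ into $W_1,W_2$ and the bridge along $\ell$, and combines optimal subwindow solutions consistently with the responsibility bits; each region is forced to be visited at the base of the recursion, once its window is small enough to contain the relevant $\Gamma_i$. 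The border regions $\bar{\cal R}_{W_0}$, which meet $\partial W_0$, are handled separately as boundary constraints at the root window.

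The step I expect to be the main obstacle --- and the heart of the argument --- is bounding the number of subproblems so that the DP runs in polynomial time, precisely the difficulty that blocks a naive extension to TSPN. Enumerating all subsets of boundary-crossing regions would be exponential; the resolution is that the $M$-region-span bridge absorbs all but $O(M)$ regions per side, so only $2^{O(M)}$ responsibility assignments arise. Since $M=\Theta((1/\eps)\log(n/\eps))$, this is $2^{O((1/\eps)\log(n/\eps))}=(n/\eps)^{O(1/\eps)}$, polynomial for fixed $\eps$, giving a total subproblem count $N^{O(1)}\cdot(n/\eps)^{O(1/\eps)}$ and, with polynomially many cuts and bridges per step, overall running time polynomial in $n$. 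It is exactly here that the logarithmic lower bound of Lemma~\ref{lem:lower-bnd} is ``just right'': a weaker bound would force $M$ to be polynomial and destroy the running time. Finally, since the added region-spans need not meet $E$, the network $E'$ may be disconnected; I would reconnect it by the standard guillotine patching and, using the fact (noted after Theorem~\ref{thm:main-guil}) that spans may be doubled within the same length bound, arrange that the computed network contains a connected Eulerian subgraph meeting every region. Shortcutting an Euler tour of this subgraph yields a genuine tour of length at most $\mu(E')\le(1+O(\eps))L^*$, completing the PTAS.
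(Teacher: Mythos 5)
Your overall architecture is the paper's: grid-round and localize via Lemmas~\ref{lem:grid} and~\ref{lem:localization}, enumerate the root window $W_0$, invoke Theorem~\ref{thm:main-guil} with $m=\Theta(1/\eps)$ and $M=\Theta((1/\eps)\log(n/\eps))$, charge the region-span term against Lemma~\ref{lem:lower-bnd}, run a DP over $(m,M)$-guillotine networks with $2^{O(M)}$ responsibility bits for the $O(M)$ unbridged internal regions per side, and recover an Eulerian subgraph by doubling bridges and patching in the marked regions' bounding boxes. All of that matches the paper, including the observation that the logarithmic lower bound is exactly what keeps $2^{O(M)}$ polynomial.

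There is, however, one genuine gap: your treatment of the border regions $\bar{\cal R}_{W_0}$. You dispose of them as ``boundary constraints at the root window,'' but they cannot be confined to the root. A region that meets $\partial W_0$ may protrude deep into $W_0$ and be visited by the tour at a point lying inside an arbitrarily small subwindow $W$; forcing it to be visited near $\partial W_0$ could lengthen the tour by an unbounded amount. So every subproblem $W$ must encode which of the border regions crossing $\partial W$ it is responsible for visiting --- and there can be $\Omega(n)$ of them crossing a single side, while the $M$-region-span mechanism cannot absorb them, because the charging argument behind Theorem~\ref{thm:main-guil} bounds region-span length by $\lambda({\cal R}_{W_0})/M$, i.e., only by the diameters of the \emph{internal} regions (border regions may have diameters far exceeding $L^*$). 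The paper's missing ingredient is a structural lemma for these ``protruders'': since the regions are disjoint and connected, each such region stretches from $\partial W_0$ across $\partial W$, so between two consecutive specified crossing edges $e_i,e_{i+1}$ they form an ordered noncrossing sequence $P_1^i,\ldots,P_{j_i}^i$, and connectivity of $E$ inside $W_0$ forces the subset visited \emph{outside} $W$ to be a prefix together with a suffix of this sequence. Hence the responsibility assignment for border regions is encodable by $O(1)$ indices per gap, i.e., $n^{O(m)}$ choices in total. Without this (or an equivalent device), your subproblem description is either exponentially large or under-specified, and the DP as you describe it does not go through.
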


\begin{proof} 
  Consider an optimal tour, $T^*$, of length $L^*$.  By
  Lemma~\ref{lem:grid} and Lemma~\ref{lem:localization}, we know that
  there is a grid-rounded tour $T^*_{\cal G}$ of comparable length
  whose vertices lie on a certain polynomial-size grid
  (of size $O(n^2/\eps)$-by-$O(n^2/\eps)$).  We will consider separately each
  choice of $W_0$, the hypothesized bounding box of $T^*_{\cal G}$.
  
  For a given choice of $W_0$, we apply the dynamic programming
  algorithm of the next section to compute a minimum-weight edge set
  $E^*$ that has several specified properties: (a) it is
  $(m,M)$-guillotine with respect to window $W_0$ and regions ${\cal
    R}_{W_0}$, with doubled bridge segments; (b) it 
satisfies certain connectivity requirements (made precise in the next section); 
and, (c) it visits all of the regions ${\cal R}$.
As described in the next section, the network that is output by the
dynamic program can be readily made to be connected and (using the
bridge-doubling) to contain an Eulerian subgraph spanning the regions.
 
  Assuming that $W_0$ is the correct choice of bounding box, by
  Theorem~\ref{thm:main-guil}, we know that the edge set $E$
  corresponding to $T^*_{\cal G}$ has an associated edge set
  $E'\supseteq E$ that satisfies properties (a)-(c) and has length at
  most $L+ O({1 \over m}L)+O({1 \over M}\lambda({\cal
    R}_{W_0}))$, where $L\leq (1+\eps)L^*$ is the length of $T^*_{\cal
    G}$.  Since $E^*$ is a minimum-length edge set satisfying
  conditions (a)-(c), we get then that $E^*$ has length at most
  $L+O({1 \over m}L)+O({1 \over M}\lambda({\cal R}_{W_0}))$.  By
  Lemma~\ref{lem:lower-bnd}, we know that $\lambda({\cal R}_{W_0}) \leq
  (L^*/C)\log (n/\eps)$, for some constant $C$.  Picking
  $m=\ceil{1/\eps}$ and $M=\ceil{(1/\eps)\log (n/\eps)}$, and putting the
  pieces together, we get that $E^*$ has length at most
  $(1+C_1\eps)L^*$, for a constant $C_1$.  
The running time of the algorithm is $2^{O(M)}n^{O(1/\eps)}$, which is
polynomial in $n$, for any fixed~$\eps$, since $M=\ceil{(1/\eps)\log
(n/\eps)}$.
\end{proof}

\section{The Algorithm}

We now describe the dynamic programming algorithm, running in
$2^{O(M)}n^{O(m)}$ time, to compute a minimum-length planar graph
having a prescribed set of properties: (1) it satisfies the
$(m,M)$-guillotine property (necessary for the dynamic program to have
the claimed efficiency); (2) it visits each of the grid point sets
$\Gamma_i$ corresponding to region $P_i$; and (3) it consists of
a connected component and a set of region-bridges, which
can then be augmented 
to be connected and to contain an
Eulerian subgraph that spans the $\Gamma_i$'s (this condition allows
us to extract a tour in the end).  We only outline here the dynamic
programming algorithm, highlighting the modifications to account for
the $M$-region-span; the details are similar to those
of~\cite{m-gsaps-99}.

Our algorithm computes $D$, the diameter of $R_0$, a minimum-diameter axis-aligned rectangle that
intersects or contains all regions $P_i$.  
By Lemma~\ref{lem:bound}, this gives us an estimate of the length of an optimal tour.
We separately consider the trivial case in which the bounding box, $W_0$ of an optimal solution
contains no vertex of any input region $P_i$ (see the earlier footnote).  
Then, we consider each possible choice of a vertex $c_0$, assumed to lie within $W_0$, 
and consider the $N$-by-$N$ grid ${\cal G}$ centered on $c_0$ (with $N=O(n^2/\epsilon)$);
all computations will now take place with respect to this grid; Lemma~\ref{lem:localization} justifies
this localization step.  For each choice of axis-aligned (grid-conforming) rectangle $W_0$ 
that intersects or contains every input region, we let 
${\cal R}_{W_0}$ denote the regions that are within $W_0$, and let $\bar{\cal R}$
be the remaining ``border'' regions.

A subproblem is defined by a rectangle $W\subseteq W_0$ (whose coordinates
are among those of the grid points ${\cal G}$), together with a 
specification of {\em boundary information}
that gives the information necessary to 
describe how the solution inside $W$ interfaces with 
the solution outside the window $W$.
This information includes the following:
\begin{description}
\item[(a)] For each of the four sides of $W$, we specify a ``bridge''
  segment (on the grid) and at most $2m$ other segments (each with endpoints among ${\cal
    G}$) that cross the side; this is done exactly as in the case of
  the Euclidean TSP on points, as in \cite{m-gsaps-99}.  There are
  $n^{O(m)}$ choices for this information. 
\item[(b)] For each of the four sides of $W$, there is a ``region
  bridge'' segment (corresponding to the $M$-region-span, with endpoints on the grid), and, for
  each of the $2M$ regions of ${\cal R}_{W_0}$ that are not
  intersected by the region bridge segment, we specify (in a single
  bit) whether the region is to be visited (at a grid point of the
  corresponding $\Gamma_i$) within the subproblem or not (if not, it
  is visited outside the window $W$).  
Also, for each of the up to four region bridges, we specify
one of the regions (the ``marked'' region for the bridge) crossed by the bridge and specify for it, 
in a single bit, whether the region is visited inside or outside the subproblem.
There are $n^{O(1)}$ choices for the region bridges (and marked regions) and
  $2^{8M+4}=n^{O(1/\eps)}$ choices for the additional bits.
\item[(c)] For each of the four sides of $W$ there may be regions of
  ${\cal R}\setminus {\cal R}_{W_0}$ that protrude from outside $W_0$
  into the subproblem $W$.  For each such region, we need to specify
  whether or not the subproblem is responsible to visit the region.
  However, there could be far too many ($\Omega(n)$) such regions.  We
  cannot afford to specify each region individually.  The key property
  of these ``protruders from the outside'' is this: They must extend
  all the way from the boundary of $W_0$ across the boundary of $W$.
  
  Consider the left side of $W$.  On this side there are possibly two
  bridging segments (the bridge and the region bridge) specified, as
  well as up to $2m$ specified edges, $e_1,\ldots,e_K$, that cross the
  side and are part of the information specified in (a).  Any region
  of ${\cal R}\setminus {\cal R}_{W_0}$ that intersects one of these
  bridge segments or one of these specified crossing edges is already
  visited by the set $E$ of edges.  There remains a set ${\cal
    R}'\subseteq {\cal R}\setminus {\cal R}_{W_0}$ of other regions
  protruding from outside $W_0$ that intersect the side of $W$ {\em
    between} the bridge segments and specified crossing segments.
  Since we are assuming that regions are disjoint, the set ${\cal R}'$
  forms an ordered set of noncrossing regions extending between the
  boundary of $W$ and the boundary of the root window $W_0$.  Consider
  the subsequence, $P_1^i,\ldots,P_{j_i}^i$ of such regions that
  extend across the subsegment of the wall bounded by $e_i$ and
  $e_{i+1}$. See Figure~\ref{fig:subproblem}.  Because the edge set
  $E$ is connected and lies entirely within $W_0$, we obtain that the
  subset of this sequence that is visited outside our subproblem is
  succinctly representable as a pair of subsequences:

\begin{figure}[htbp]
 \centering{\includegraphics[width=\columnwidth]{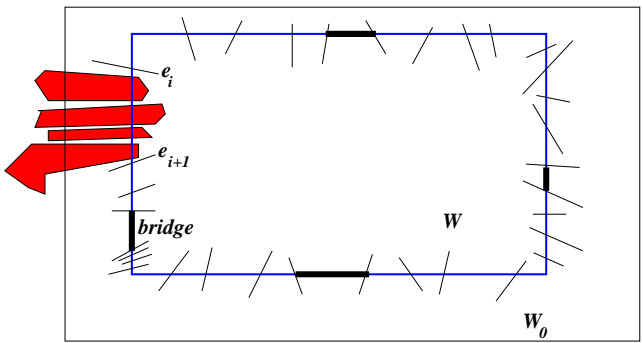}}
\caption{The subproblem defined by window $W$ within the
  bounding rectangle $W_0$.  Some of the edges crossing the boundary
  of $W$ are shown, as are the bridges. (The region bridges are not
  shown, in order not to clutter the diagram.)  The red (shaded) regions shown
  are those regions ${\cal R}'\subseteq {\cal R}\setminus {\cal
    R}_{W_0}$ that protrude from outside $W_0$ in between two
  consecutive segments $e_i$ and $e_{i+1}$ that are part of the
  boundary information for the left wall of $W$.}
 \label{fig:subproblem}
 \end{figure}

\begin{lemma}
  The subset of $\{P_1^i,\ldots,P_{j_i}^i\}$ that is visited by
  portions of $E$ external to $W$ is of the form
  $\{P_1^i,\ldots,P_k^i\}\cup\{P_l^i,\ldots,P_{j_i}^i\}$, for $1\leq
  k\leq l\leq j_i$.
\end{lemma}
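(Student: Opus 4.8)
The plan is to argue by topological separation, using only the two global properties we are given about $E$: that it is connected and that it lies entirely within $W_0$. First I would fix the geometric picture set up just before the lemma (see Figure~\ref{fig:subproblem}): the regions $P_1^i,\ldots,P_{j_i}^i$ are pairwise disjoint, appear in this order along the wall of $W$, and each extends all the way across the external pocket (the part of $W_0$ outside $W$ lying between the consecutive specified crossing edges $e_i$ and $e_{i+1}$), from the wall of $W$ to $\partial W_0$. Consecutive fingers therefore cut the pocket into a linear sequence of ``channels,'' and for any $a<b$ the two fingers $P_a^i$ and $P_b^i$, together with the subsegment $\tau_{ab}$ of the wall of $W$ between their attachment points and the arc of $\partial W_0$ between their far ends, bound a simply connected region $\Omega_{ab}$ of the pocket that contains exactly the intermediate fingers $P_{a+1}^i,\ldots,P_{b-1}^i$ and the channels between them.

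It suffices to show that the set of fingers \emph{not} visited externally is contiguous, since this is exactly equivalent to the externally visited set being a prefix together with a suffix. I would prove this by contradiction: suppose there are indices $a<c<b$ with $P_a^i$ and $P_b^i$ \emph{not} visited externally while $P_c^i$ \emph{is}. That $P_c^i$ is visited externally means there is a point $x\in E\cap P_c^i$ lying outside $W$; since $P_c^i$ is one of the intermediate fingers, $x$ lies inside $\Omega_{ab}$.

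Now I would invoke connectivity. Because $E$ is connected and is not contained in $\overline{\Omega_{ab}}$ (it contains, for instance, the crossing edge $e_i$, whose wall-crossing point lies outside $\Omega_{ab}$), there is a path inside $E$ from $x$ to a point outside $\Omega_{ab}$, and this path must cross the boundary $\partial\Omega_{ab}$. The boundary has four kinds of pieces, and I would rule out three: the path cannot cross the $\partial W_0$ arc, since $E\subseteq W_0$; and it cannot cross $\tau_{ab}$, since $\tau_{ab}$ lies on the wall strictly between the consecutive specified crossing edges $e_i$ and $e_{i+1}$, where by construction no edge of $E$ crosses the wall. Hence the path must cross $\partial P_a^i$ or $\partial P_b^i$ at a point that is still outside $W$ — that is, $E$ meets $P_a^i$ or $P_b^i$ externally, contradicting the assumption that neither is visited externally. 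This contradiction forces the non-externally-visited fingers to form a single interval $\{P_{k+1}^i,\ldots,P_{l-1}^i\}$, so the externally visited ones are exactly $\{P_1^i,\ldots,P_k^i\}\cup\{P_l^i,\ldots,P_{j_i}^i\}$, as claimed.

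The step I expect to be the main obstacle is the careful topological bookkeeping in defining $\Omega_{ab}$ and describing its boundary. The fingers are arbitrary fat regions (possibly non-convex, with ``skinny tentacles''), so I cannot appeal to convexity or to the fingers being straight; instead I must justify that the disjoint, ordered, pocket-spanning fingers genuinely partition the pocket into well-defined simply connected cells whose only boundary pieces that are ``permeable'' to $E$ are the fingers themselves. Once that separation is established, the connectivity/Jordan-curve argument ruling out a sandwiched externally-visited finger follows cleanly from the connectivity of $E$ and the containment $E\subseteq W_0$.
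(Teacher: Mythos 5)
Your argument is correct and takes essentially the same route the paper intends: the paper offers only the one-sentence justification that ``the edge set $E$ is connected and lies entirely within $W_0$,'' and your separation argument via the disjoint, pocket-spanning fingers (ruling out an escape of $E$ across $\partial W_0$ or across the wall between $e_i$ and $e_{i+1}$) is precisely the elaboration of that claim. The topological bookkeeping you flag as the main obstacle is indeed the only delicate point, and the paper does not spell it out either.
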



Thus, our subproblem can afford to specify, for each pair
$(e_i,e_{i+1})$ along each side of $W$ which subsequence of the
regions protruding from outside are the ``responsibility'' of the
subproblem to visit.

\item[(d)] We specify a required ``connection pattern'' within $W$.
  In particular, we indicate which subsets of the $O(m)$ bridge segments and specified
  edges crossing the boundary of $W$ are required to be connected
  within $W$.  (This is done exactly as is detailed for the
  Euclidean TSP on point sets in \cite{m-gsaps-99}.)
\end{description}

The algorithm produces an optimal $(m,M)$-guillotine network that
satisfies the required constraints, to visit all unbridged regions (as
well as one region associated with each region bridge), and to obey
connectivity constraints.  However, the connectivity constraints do
not explicitly require that every region bridge be connected in with
the rest of the network, and we rely on the region bridge segments to
assure that {\em all} regions are indeed visited by the connected
network.  Thus, at the end of the algorithm, we postprocess our
computed network to ensure global connectivity.  As in
\cite{dm-aatsp-03}, we do this simply as follows, augmenting the
computed network to make each region bridge connected to it.  Working
bottom-up in the hierarchy, we take two sibling subproblems and
consider the region bridge (if any) along the cut between the
subproblems.  We add to the network the boundary of the marked region
associated with the region bridge; by fatness, the perimeter of its
bounding box is at most a constant times greater than the length of
the region bridge.  Further, since the connectivity constraints
required that the marked region be visited (on one side or the other
of the cut) by the network, we know that adding the boundary of the
bounding box of the marked region enforces that the region bridge is
connected to the network.  The total length added in this process is
at most proportional to the lengths of the region bridges; since the
charging scheme ensures that the region bridges need not be more than
$O(\epsilon L^*)$, we know we can afford to add these cycles around
the bounding boxes of marked regions. 

In order to end up with a graph having an Eulerian subgraph spanning
the regions, we use the same trick as done in \cite{m-gsaps-99}: we
``double'' the bridge segments, as well as the region bridge segments, and then
require that the number of connections on each side of a bridge
segment satisfy a parity condition (specified as part of the subproblem).  Exactly as in
\cite{m-gsaps-99}, this allows us to extract a tour from the
planar graph that results from the dynamic programming algorithm
(which gives a shortest possible graph that obeys the specified
conditions).   
The doubled region bridge segments allow the postprocessed network
to preserve the Eulerian property.
  
The result is that in polynomial time ($n^{O(m)}$) one can compute
a shortest possible graph, from a special class of such graphs, and
this graph spans the regions ${\cal R}$, and is Eulerian,
so we can extract a tour.

\remark The running time can be improved to $O(n^{C})$, for a constant
$C$ independent of $1/\eps$, using the method of ``grid-rounded
guillotine subdivisions,'' developed in~\cite{m-gsaps-99,m-gsaps-97}.

\section{Conclusion}
\label{sec:conclusion}

One immediate generalization of our main result is to a special case
of disconnected regions (as in Feremans and Grigoriev~\cite{fg-asggp-05}
and Elbassioni
et al.~\cite{efms-aaegtsp-05}):
We can allow the
regions to be visited to be {\em sets} of points/polygons, each of which
lies within a polygon $P_i$, where the (connected) polygons $P_i$ are fat and disjoint.

Another generalization for which our results give a PTAS is the
$k$-TSPN, in which an integer $k$ is specified and the objective is to
find a shortest tour that visits $k$ regions.

Several open problems remain, including
\begin{description}
\item[(1)] Is there a constant-factor approximation algorithm for
arbitrary connected regions in the plane?  (If the diameters of the
regions are comparable, there are $O(1)$-approximations
known~\cite{dm-aatsp-03,efs-atspin-06}.)  What if the regions are
disconnected?  (giving us a geometric version of a ``one-of-a-set
TSP'')
\item[(2)] What approximation bounds can be obtained in higher dimensions?
A particularly intriguing special case is the generalization of the case of
infinite straight lines: What can be said in 3-space for the TSPN on a set 
of lines or of planes?
\item[(3)] Is there a PTAS for general pairwise-disjoint regions in the plane?
The known APX-hardness proofs rely on regions that may overlap.
\end{description}

\subsection*{Acknowledgments}

I thank Eyal Ackerman, Otfried Cheong, Khaled Elbassioni, and Rene Sitters
for helpful comments and corrections 
on an earlier draft. 

\bibliographystyle{abbrv}

\begin{thebibliography}{10}

\bibitem{ah-aagcs-94}
E.~M. Arkin and R.~Hassin.
\newblock Approximation algorithms for the geometric covering salesman problem.
\newblock {\em Discrete Appl. Math.}, 55:197--218, 1994.

\bibitem{a-ptase-98}
S.~Arora.
\newblock Polynomial time approximation schemes for {Euclidean} traveling
  salesman and other geometric problems.
\newblock {\em J. ACM}, 45(5):753--782, 1998.

\bibitem{a-asnhg-03}
S.~Arora.
\newblock Approximation schemes for {NP}-hard geometric optimization problems:
  {A} survey.
\newblock {\em Mathematical Programming}, 97(1-2):43--69, 2003.

\bibitem{bgklos-tspnv-05}
M.~de~Berg, J.~Gudmundsson, M.~J. Katz, C.~Levcopoulos, M.~H. Overmars, and
  A.~F. van~der Stappen.
\newblock {TSP} with neighborhoods of varying size.
\newblock {\em Journal of Algorithms}, 57:22--36, 2005.

\bibitem{do-coedg-04}
M.~Dror and J.~B. Orlin.
\newblock Combinatorial optimization with explicit delineation of the ground
  set by a collection of subsets.
\newblock Technical report, MIT, 2004.

\bibitem{dm-aatsp-03}
A.~Dumitrescu and J.~S.~B. Mitchell.
\newblock Approximation algorithms for {TSP} with neighborhoods in the plane.
\newblock {\em Journal of Algorithms}, 48:135--159, 2003.
\newblock Special issue devoted to 12th ACM-SIAM Symposium on Discrete
  Algorithms, Washington, DC, January, 2001.

\bibitem{efms-aaegtsp-05}
K.~Elbassioni, A.~V. Fishkin, N.~H. Mustafa, and R.~Sitters.
\newblock Approximation algorithms for {Euclidean} group {TSP}.
\newblock In {\em Proc. 32nd Internat. Colloq. Automata Lang. Prog.}, volume
  3580 of {\em Lecture Notes Comput. Sci.}, pages 1115--1126. Springer-Verlag,
  2005.

\bibitem{efs-atspin-06}
K.~Elbassioni, A.~V. Fishkin, and R.~Sitters.
\newblock On approximating the {TSP} with intersecting neighborhoods.
\newblock In {\em Proc. 17th Annu. Internat. Sympos. Algorithms Comput.}, 
 to appear, Dec. 2006.

\bibitem{fg-asggp-05}
C.~Feremans and A.~Grigoriev.
\newblock Approximation schemes for the generalized geometric problems with
  geographic clustering.
\newblock In {\em Abstracts 21st European Workshop Comput. Geom.}, 2005.

\bibitem{gj-cigtn-79}
M.~R. Garey and D.~S. Johnson.
\newblock {\em Computers and Intractability: {A} Guide to the Theory of
  {NP}-Completeness}.
\newblock W. H. Freeman, New York, NY, 1979.

\bibitem{gl-faatn-99}
J.~Gudmundsson and C.~Levcopoulos.
\newblock A fast approximation algorithm for {TSP} with neighborhoods.
\newblock {\em Nordic J. Comput.}, 6:469--488, 1999.

\bibitem{mm-aagtn-95}
C.~Mata and J.~S.~B. Mitchell.
\newblock Approximation algorithms for geometric tour and network design
  problems.
\newblock In {\em Proc. 11th Annu. ACM Sympos. Comput. Geom.}, pages 360--369,
  1995.

\bibitem{m-gsaps-97}
J.~S.~B. Mitchell.
\newblock Guillotine subdivisions approximate polygonal subdivisions: {Part
  III} -- {Faster} polynomial-time approximation schemes for geometric network
  optimization.
\newblock Manuscript, University at Stony Brook, 1997.

\bibitem{m-gsaps-99}
J.~S.~B. Mitchell.
\newblock Guillotine subdivisions approximate polygonal subdivisions: {A}
  simple polynomial-time approximation scheme for geometric {TSP}, {$k$-MST},
  and related problems.
\newblock {\em SIAM J. Comput.}, 28(4):1298--1309, 1999.

\bibitem{m-gspno-00}
J.~S.~B. Mitchell.
\newblock Geometric shortest paths and network optimization.
\newblock In J.-R. Sack and J.~Urrutia, editors, {\em Handbook of Computational
  Geometry}, pages 633--701. Elsevier Science Publishers B.V. North-Holland,
  Amsterdam, 2000.

\bibitem{m-spn-04}
J.~S.~B. Mitchell.
\newblock Shortest paths and networks.
\newblock In J.~E. Goodman and J.~O'Rourke, editors, {\em Handbook of Discrete
  and Computational Geometry (2nd Edition)}, chapter~27, pages 607--641.
  Chapman \& Hall/CRC, Boca Raton, FL, 2004.

\bibitem{p-etspi-77}
C.~H. Papadimitriou.
\newblock The {Euclidean} traveling salesman problem is {NP}-complete.
\newblock {\em Theoret. Comput. Sci.}, 4:237--244, 1977.

\bibitem{rs-aggsb-98}
S.~B. Rao and W.~D. Smith.
\newblock Approximating geometrical graphs via ``spanners'' and ``banyans''.
\newblock In {\em Proc. 30th Annu. ACM Sympos. Theory Comput.}, pages 540--550,
  1998.

\bibitem{ss-catsp-03}
S.~Safra and O.~Schwartz.
\newblock On the complexity of approximating TSP with neighborhoods and related
  problems.
\newblock In {\em Proc. 11th Annu. European Sympos. Algorithms}, volume 2832 of
  {\em Lecture Notes Comput. Sci.}, pages 446--458. Springer-Verlag, 2003.

\end{thebibliography}

\end{document}